\PassOptionsToPackage{hyphens}{url}
\documentclass[a4paper,UKenglish,cleveref,autoref,]{lipics-v2021}
\nolinenumbers
\input{macros}

\title{Pod-Deployability in Kubernetes with Inter-Pod Affinity Constraints is PSPACE-Complete}
\titlerunning{Kubernetes Pod-Deployability is PSPACE-Complete}
\author{Saverio Giallorenzo}
  {Universit\`a di Bologna, Italy \and INRIA, Sophia-Antipolis, France}
  {}{}{}
\author{Jacopo Mauro}
  {University of Southern Denmark, Denmark}
  {}{}{}
\author{Gianluigi Zavattaro}
  {Universit\`a di Bologna, Italy \and INRIA, Sophia-Antipolis, France}
  {}{}{}
\authorrunning{S. Giallorenzo, J. Mauro, and G. Zavattaro}
\Copyright{Saverio Giallorenzo, Jacopo Mauro, and Gianluigi Zavattaro}
\ccsdesc[500]{Theory of computation~Problems, reductions and completeness}
\ccsdesc[300]{Computer systems organization~Cloud computing}
\keywords{Kubernetes, scheduling, pod affinity, anti-affinity, coverability, PSPACE}
\EventEditors{John Q. Open and Joan R. Access}
\EventNoEds{2}
\EventLongTitle{XXXX}
\EventShortTitle{XXXX 2027}
\EventAcronym{XXXX}
\EventYear{2027}
\EventDate{July 2027}
\EventLocation{TBD}
\EventLogo{}
\SeriesVolume{XXX}
\ArticleNo{XX}

\begin{document}
\maketitle
\begin{abstract}
Kubernetes is the de-facto platform for container orchestration. Its scheduler
combines resource capacities with label-based affinity and anti-affinity rules,
and the interaction of these features can make the eventual placement of a pod.

In this paper, we study the \emph{pod-deployability} problem: given an initial cluster, a pod type,
and a designated node, does some legal sequence of pod deployments and
deletions cover the target pair?
We give three complexity results. First, when dynamic constraints
contain no affinity (anti-affinity is allowed),
pod-deployability is decidable in polynomial time. Second, required affinity
together with required anti-affinity makes the problem \(\PSPACE\)-complete. Third, required affinity alone is already
enough for \(\PSPACE\)-completeness on a single node with one scalar capacity.
The lower bounds encode, respectively,
1-safe Petri-net coverability and bounded black pebbling. These results isolate
two independent sources of state-space complexity in Kubernetes scheduling:
logical exclusion and resource-bounded prerequisite management.
\end{abstract}
\newpage
\section{Introduction}\label{sec:introduction}

Kubernetes \cite{kubernetes_documentation} is the de-facto standard platform for
orchestrating containerized applications within a cluster of machines
\cite{carrion2023kubernetes}---around 82\% of container-using organizations
worldwide adopt it and major cloud providers (like AWS, Google, and Azure)
offer it as a managed service \cite{cncf2026survey, mordor2026market}.

Crucial terms in Kubernetes' lingo are \emph{nodes} and \emph{pods}. The former
are the machines participating in a cluster. Each node contributes computing
resources, such as CPU and memory, and may carry labels describing properties
such as its location, hardware, or security domain. Pods are the smallest unit
managed by Kubernetes and a pod groups one or more containers placed on the same
node and share runtime resources.

At the heart of Kubernetes lies its scheduler. When the platform receives the
command to deploy a given workload, it creates one or more pods with no assigned
node, which triggers the scheduler to evaluate the cluster and select a suitable
node for each one to run on. A node is suitable only if it satisfies all the
pod's hard scheduling constraints. These constraints include, among others,
sufficient resource capacity, requirements on node labels, and relationships
with pods that are already running. In particular, \emph{inter-pod affinity} can
require a pod to be placed close to other pods, while \emph{inter-pod
anti-affinity} can require separation from them---where ``closeness'' is
determined by a topology domain, e.g., ranging from single nodes to whole
availability zones. Although Kuberbetes evaluates each scheduling constraint
locally at each pod placement, their combined effect over a sequence of pod
deployments and deletions is not local.

In this paper, we focus on the decidability and complexity of a basic question:
\emph{can a pod of type \(t\) ever be deployed on node \(n\)}? Answering this
question is practically important, for instance, a positive answer may expose an
unintended crossing of a security boundary---e.g., a pod processing private data
might be deployable on a node belonging to a public or otherwise untrusted
worker pool. We call this decision problem \emph{pod-deployability}. Formally,
we frame the problem as the coverability of a type-node coordinate and ask
whether some reachable cluster configuration contains at least one instance of
the target type-node.%

\noindent
\textbf{Contributions.}
We formalise the Kubernetes scheduling semantics (\Cref{sec:model}) and use it
to give three main results that expose a sharp complexity boundary.
\begin{enumerate}
\item With anti-affinity and no capacities nor affinity, pod-deployability
reduces to one feasibility check in the empty configuration and is decidable in
time polynomial in the number of nodes and the total encoding length of the
scheduling constraints (\Cref{sec:polynomial}).
\item With affinity and anti-affinity and no capacities, pod-deployability is
\(\PSPACE\)-complete. The proof uses a four-phase protocol to simulate 1-safe
Petri-net transitions (\cref{sec:logical-pspace}).
\item With affinity alone, pod-deployability is \(\PSPACE\)-complete already for
one node and one scalar capacity---reduction from black-pebbling under a
matching space bound.
(\cref{sec:capacity-pspace}).
\item 
We mechanized all reductions and formally verified them in Lean, establishing a
machine-checked correspondence between pod-deployability and the corresponding
hardness sources.
\end{enumerate}
\noindent
\textbf{Structure of the paper.}
In \Cref{sec:model}, we formalize Kubernetes' scheduler semantics. Then, we
formulate pod-deployability as a coverability problem in \cref{sec:problem},
proving that it is in \(\PSPACE\). We study the polynomial fragment without
affinity in \cref{sec:polynomial} and present the two \(\PSPACE\)-completeness
results in \cref{sec:logical-pspace,sec:capacity-pspace}. We discuss related
work and conclude in \cref{sec:related}. The appendix includes detailed proofs
of all results and the encodings of the hardness proofs, which we formalised in
Lean and make available at \url{https://doi.org/10.5281/zenodo.22025656}.

\blfootnote{
\textbf{Disclosure of AI Usage.}
We used AI-based tools to assist in preparing this paper. Specifically, such tools helped refine the clarity and readability of the prose, and assisted in proving theorems and refactoring code within the Lean formalization. We reviewed and independently verified all AI-assisted output. We take full responsibility for the correctness of the results of the paper.}
\section{Kubernetes Scheduling Made Formal}
\label{sec:model}

We present Kubernetes and formalize its standard pod scheduling policy.
Kubernetes is a container orchestration platform that manages applications
deployed as collections of \emph{pods}. A Kubernetes cluster consists of a set
of worker nodes together with a control plane responsible for maintaining the
desired cluster state. Among its components, the Kubernetes scheduler determines
where newly created pods should run.

When a pod is created without a node assignment, it remains pending until the
scheduler selects a suitable node and records a binding
\cite{k8s-scheduler,k8s-framework}. Scheduling proceeds conceptually in two
main stages. During \emph{filtering}, nodes that violate mandatory scheduling
constraints are discarded. During \emph{scoring}, the remaining nodes are
ranked according to preferences such as resource usage or preferred
affinity. After a binding is committed, the kubelet on the selected node
starts and supervises the pod's containers.

Since this work is concerned with the feasibility of placements---rather than
with the scheduler's choice among feasible nodes---we model only \emph{hard}
scheduling constraints. Indeed, soft constraints affect which feasible node is
preferred but cannot make an otherwise infeasible placement legal. We also
abstract from implementation details such as scheduler concurrency, internal
caches, transient API failures, preemption, and binding plugins. The resulting
model retains the cluster information relevant to hard placement feasibility and
represents successful scheduling nondeterministically: any node satisfying all
hard constraints may be selected.

\subsection{Pods, Nodes, and Cluster Configurations}

We introduce the formal model alongside the Kubernetes objects that it
represents. Let \(\PodTypes\), \(\Nodes\), and \(\Resources\) be finite sets of
pod types, nodes, and resource dimensions, respectively. Let also
$\mathsf{LabelKey}$ and $\mathsf{LabelValue}$ be a set of labels and
corresponding values, respectively.

A \emph{pod type} represents the scheduling-relevant information shared by
instances created from the same pod template. In particular, each
\(t\in\PodTypes\) has a label map
\(
\lambda_T(t):
\mathsf{LabelKey}\rightharpoonup\mathsf{LabelValue}
\)
and a resource-request vector
\(
\texttt{res}(t)\in\mathbb N^{\Resources}.
\)
Similarly, each node \(n\in\Nodes\) has a label map
\(
\lambda_N(n):
\mathsf{LabelKey}\rightharpoonup\mathsf{LabelValue}
\)
and an allocatable capacity vector
\(
C(n)\in\mathbb N^{\Resources}.
\)

The scheduler reasons about pod resource \emph{requests},
and resources include dimensions like CPU and memory. We denote which pods instances run on which cluster nodes as
\[
\sigma:\PodTypes\times\Nodes\longrightarrow\mathbb N,
\]
where \(\sigma(t,n)\) is the number of running instances of type \(t\) on node
\(n\).

Therefore, the total resources requested on a node \(n\) in configuration \(\sigma\)
are 
\[
\used(\sigma,n)
=
\sum_{t\in\PodTypes}
\sigma(t,n) \; \texttt{res}(t).
\]
Consequently, placing an additional instance of type \(t\) on \(n\) respects
the node's capacity 
when
\[
\used(\sigma,n)+\texttt{res}(t)\leq C(n),
\]
with component-wise vector addition and comparison.

We write \(\zero\) for the empty configuration, defined by \(\zero(t,n)=0\) for
every \(t\in\PodTypes,\ n\in\Nodes\). We further write \(\sigma+(t,n)\) for the
configuration obtained by incrementing \(\sigma(t,n)\) by one, and
\(\sigma-(t,n)\) for the corresponding decrement.

Kubernetes uses labels and selectors to express relationships between pods.
A selector \(s\) is a predicate over pod labels. We write
\[
\sem{s}
=
\{t\in\PodTypes
  \mid \lambda_T(t)\text{ satisfies }s\}
\]
for the set of pod types matched by \(s\).

Scheduling constraints can also refer to the topology of the cluster. Kubernetes
represents topology through node labels. A topology key \(\tau\), e.g., the
hostname or availability zone of a node, partitions the nodes according to the
corresponding label value.
For a node \(n\), we define its topology domain with respect to \(\tau\) by
\[
D_\tau(n)
=
\{\ m\in\Nodes
  \ \mid\
  \lambda_N(m)(\tau)=\lambda_N(n)(\tau)\ \}.
\]
Informally, we can view \(D_\tau(n)\) as the \(n\)-induced domain under topology
key \(\tau\). Thus, if \(\tau\) denotes the hostname label, \(D_\tau(n)\)
typically contains only \(n\), while if \(\tau\) denotes a zone, \(D_\tau(n)\)
contains all the nodes in \(n\)'s zone. For well-formedness, we assume that all
nodes have a value for every topology key, so that \(\lambda_N(m)(\tau)\) is
always defined.

\subsection{Static Constraints}

Kubernetes enforces several hard constraints that determine whether a node is a
valid candidate for a pod independently of the current cluster configuration: i)
node affinity constraints use labels to restrict a pod to nodes satisfying
specified label requirements (e.g., a pod may require a node labelled as
belonging to a particular zone); ii) taints and tolerations allow nodes to repel
pods that do not explicitly tolerate particular node conditions or
administrative restrictions (e.g., a node reserved for a particular workload may
be tainted so that only pods explicitly configured to tolerate that taint can be
placed on it).

We collect these configuration-independent checks into the predicate
$\StaticOK(t,n)$. For a pod type $t$ and a candidate node $n$, evaluating
$\StaticOK(t,n)$ amounts to matching the label requirements and tolerations of $t$
against the labels and taints of $n$, and takes time linear in the total number
of these label requirements, tolerations, labels, and taints.
\medskip

\noindent
\textbf{Other Scheduling Constraints.}
While minimal, the combination of resource and pod-node constraints
can capture additional ad-hoc constraints
enforced by Kubernetes, for instance, the availability of
persistent volumes and host network ports~\cite{k8s-config}.

Regarding volumes, we can represent volume-related placement constraints by
combining $\StaticOK(t,n)$ with resource capacities, which captures both
compatibility requirements between a pod and the volumes available on a
candidate node and quantitative restrictions, like the available storage or
number of volume slots. Regarding network ports, we can model port-related
placement constraints using resource capacities. In particular, we can capture a
port as a distinct resource with unit capacity on each node, while a pod
requesting that port consumes one unit of the corresponding resource. The
capacity constraint then ensures that no two pods scheduled on the same node can
claim the same port.

\subsection{Inter-Pod Affinity and Anti-Affinity Constraints}

Inter-pod affinity and anti-affinity constrain a pod according to the location
of other pods. Required affinity permits a placement only when a pod matching a
given selector (\(s\)) is already present in a specified topology domain
(determined by \(\tau\)). Required anti-affinity has the opposite effect: it
prevents the incoming pod from being placed in a domain containing a matching
pod.

We represent an affinity or anti-affinity term by a pair \((s,\tau)\), where
\(s\) is a pod selector and \(\tau\) a topology key. For each pod type \(t\),
let \(A^+(t)\) and \(A^-(t)\) denote its sets of required affinity and required
anti-affinity terms, respectively.

We define predicate \(\mathsf{present}\) to determine whether a matching pod is present in a given \(n\)-induced, \(\tau\)-determined domain
\[
\mathsf{present}_\sigma(s,\tau,n)
\iff
\exists u\in\sem{s}\;
\exists m\in D_\tau(n):
\sigma(u,m)>0.
\]

The required affinity and anti-affinity constraints of an incoming pod of type
\(t\) on node \(n\) in configuration \(\sigma\) are satisfied when predicate
\(\mathsf{AffinityOK}_\sigma(t,n)\) holds.

\[
\mathsf{AffinityOK}_\sigma(t,n)\iff\;
  \overbrace{
  \bigwedge_{(s,\tau)\in A^-(t)}
  \hspace{-.5em}
  \neg\mathsf{present}_\sigma(s,\tau,n)}^{\text{required anti-affinity}} 
  \quad\land\
  \overbrace{
  \hspace{-2em}\bigwedge_{\substack{(s,\tau)\in A^+(t)\\
  t\notin\sem{s}\,\vee\,\exists u,m:\,u\in\sem{s}\land\sigma(u,m)>0}}
  \hspace{-2em}\mathsf{present}_\sigma(s,\tau,n)}^{\text{required affinity}} 
\]

In particular, the definition captures the exception for self-affinity that
avoids the bootstrap problem for scheduling self-affine pods, by allowing the
first such pod to be scheduled when no matching pod is yet present in the
cluster~\cite{k8s-assigning}. Formally, for an affinity term $(s,\tau)$ such
that the incoming type $t$ itself matches the selector, i.e., $t\in\sem{s}$, the
term is not imposed when no pod matching $s$ is present anywhere in the cluster.
In this way, the first pod of a self-affine group can be scheduled despite the
absence of an existing affinity witness. Once a pod matching $s$ exists in the
cluster, the exception no longer applies and the usual affinity condition is
enforced: a matching pod must be present in the topology domain $D_\tau(n)$ of
the candidate node. Thus, the cluster-wide test is used only to determine
whether the self-affinity bootstrap exception applies, whereas
$\mathsf{present}_\sigma(s,\tau,n)$ enforces the ordinary topology-local
affinity requirement.

\subsection{Topology-Spread Constraints}

Topology-spread constraints provide another mechanism for relating the placement
of an incoming pod to the distribution of existing pods, and specify how to
spread replicas across topology domains such as nodes, zones, or other
administrator-defined domains. In particular, a topology-spread constraint
restricts pod placement by rejecting candidate nodes that would cause exceeding
the permitted skew between topology domains.

We represent a hard topology-spread constraint by
\(
h=(s,\tau,k,\ell,E),
\)
where:
\(s\) is the selector identifying the pods whose distribution is counted;
\(\tau\) is the topology key;
\(k\geq1\) is the maximum permitted skew;
\(\ell\geq1\) is the minimum number of eligible topology domains; and
\(E\subseteq\Nodes\) is the set of eligible nodes.

For a topology value \(v\), define
\[
\mathsf{count}_\sigma(s,\tau,v,E)
=
\sum_{u\in\sem{s}}
\sum_{\substack{
m\in E\\
\lambda_N(m)(\tau)=v}}
\sigma(u,m).
\]
This is the number of resident pods matching \(s\) in the eligible part of
the \(\tau\)-\(v\) topology domain.

Given a topology-spread constraint
\(
h=(s,\tau,k,\ell,E),
\)
the skew calculation uses the minimum number of matching pods among eligible
domains defined as follows:
\[
\mathsf{gmin}_\sigma(h)
=
\begin{cases}
\displaystyle
\min_{v\in\{\lambda_N(n)(\tau)\mid n\in E\}}
\mathsf{count}_\sigma(s,\tau,v,E),
&
\text{if }
\left|
\{\lambda_N(n)(\tau)\mid n\in E\}
\right|
\geq\ell,
\\[2ex]
0,
&
\text{otherwise}.
\end{cases}
\]

Thus, when at least \(\ell\) eligible domains exist, the global minimum is
the minimum number of matching pods among them. When fewer than \(\ell\)
domains exist, Kubernetes treats the global minimum as zero.

Whether the incoming pod contributes to the count depends on whether its
type matches the selector. Define
\[
\delta(t,s)
=
\begin{cases}
1,&t\in\sem{s},\\
0,&t\notin\sem{s}.
\end{cases}
\]

Placing a pod of type \(t\) on node \(n\) satisfies the topology-spread
constraint \(h=(s,\tau,k,\ell,E)\) exactly when
\[
\mathsf{SpreadOK}(h,t,n,\sigma)
\iff
n\in E
\land
\mathsf{count}_\sigma
  (s,\tau,\lambda_N(n)(\tau),E)
+\delta(t,s)
-\mathsf{gmin}_\sigma(h)
\leq k.
\]

For example, suppose that two eligible zones \(a\) and \(b\) contain,
respectively, two and one pods matching \(s\), and that the maximum skew is
\(k=1\). If the incoming pod also matches \(s\), the global minimum is one.
Placing the pod in \(a\) would produce
\(
2+1-1=2>1
\)
and is therefore rejected. Placing it in \(b\), however, produces
\(
1+1-1=1,
\)
which satisfies the constraint.

For each pod type \(t\), let \(\mathsf{Spread}(t)\) denote its set of hard
topology-spread constraints.

\subsection{Hard Scheduling Feasibility}
\label{subsec:schedulability}

We can now combine the preceding Kubernetes constraints into a single
predicate describing the filtering stage relevant to our analysis.

\begin{definition}[Feasible deployment]
\label{def:hard}
A pod of type \(t\) is feasible on node \(n\) in configuration \(\sigma\) if \(\Hard(t,n,\sigma)\) holds, such that
\begin{align*}
\Hard(t,n,\sigma)\iff{}&
\used(\sigma,n)+\texttt{res}(t)\leq C(n)
\\
&{}\land\
\StaticOK(t,n)
\\
&{}\land\
\mathsf{AffinityOK}(t,n,\sigma)
\\
&{}\land\
\bigwedge_{h\in\mathsf{Spread}(t)}
\mathsf{SpreadOK}(h,t,n,\sigma).
\end{align*}
\end{definition}

The four components of this predicate correspond to the aspects of hard
Kubernetes scheduling represented by our model: resource capacity, static node
compatibility, required inter-pod affinity and anti-affinity, and hard topology
spread.

We conclude with an observation that will be useful for investigating the
complexity of the pod-deployability problem. Under the standard explicit
representation of nodes, pod types, labels, resources, selectors, and scheduling
constraints, evaluating \(\Hard(t,n,\sigma)\) takes polynomial time in the size
of the cluster representation. In fact, it is sufficient to check the
component-wise inequality \(\used(\sigma,n)+\texttt{res}(t)\leq C(n)\), check
\(\StaticOK(t,n)\) that we consider encoded in the cluster representation,
compute $\mathsf{present}_\sigma(s,\tau,n)$ to check
$\mathsf{AffinityOK}(t,n,\sigma)$ (in case of self-affinity, we also check
whether there is already a pod deployed), and checks \(n\in E_h\) for every
encoded constraint \(h\in\mathsf{Spread}(t)\). 
Note that comparison of binary-encoded integers (necessary to check resource
bounds and spread constraints) is polynomial in their bit length, so capacities
may be numerically exponential in the size of their encoding without affecting
the polynomial bound. 

\subsection{Scheduling as a Transition System}

The predicate in \cref{def:hard} describes whether a single placement is legal.
To reason about sequences of scheduling decisions, deployments, and deletions,
we model the cluster as a transition system whose states are configurations.

A successful scheduling operation adds one pod instance to one node. Since we
abstract away the scoring stage that introduces only soft constraints, the
choice among feasible nodes is nondeterministic: any node satisfying \(\Hard\)
may be selected.

We also permit deletion of a running pod, which abstracts how pods may disappear
from a Kubernetes cluster---at the level of our model, the precise operational
cause for a deletion, like scale-downs, evictions, and terminations, is irrelevant.

The transition relation is generated by the following two rules:
\[
\frac{\Hard(t,n,\sigma)}
     {\sigma\longrightarrow\sigma+(t,n)}
\;\textsc{Deploy}
\qquad
\frac{\sigma(t,n)>0}
     {\sigma\longrightarrow\sigma-(t,n)}
\;\textsc{Delete}.
\]

The \textsc{Deploy} rule abstracts filtering, selection, and successful binding.
The rule states only that the placement is permitted by all hard constraints
represented in the model.

An important consequence of this semantics is that scheduling feasibility is not
preserved under later deletions. Indeed, affinity and topology-spread
constraints are evaluated at a pod's \textsc{Deploy}ment and removing a pod can
invalidate the condition that enabled the placement.

Dually, scheduling unfeasibility does not persist under later deletions.
For example, consider a single node with a resource capacity of two and three
pod types \(a\), \(b\), and \(c\), each requesting one unit of the resource.
Suppose that \(b\) requires affinity to \(a\), while \(c\) requires affinity to
\(b\). Starting from the empty configuration, the sequence
\[
\zero
\longrightarrow
\{a\}
\longrightarrow
\{a,b\}
\longrightarrow
\{b\}
\longrightarrow
\{b,c\}
\]
is legal: first deploy \(a\), then use \(a\) as the affinity witness for \(b\);
delete \(a\); and finally use \(b\) as the witness for \(c\). At no point can
all three pods coexist because the node has capacity two. Nevertheless, the
temporary presence of \(a\) enables \(b\), which remains after \(a\) is deleted
and can subsequently enable \(c\). Thus, the set of pods present in the final
configuration does not contain all the intermediate witnesses needed to
construct it.

This distinction between \emph{currently satisfied constraints} and
\emph{constraints satisfied at deployment time} is central to the reachability
behavior of the model. In particular, this deploy-remove mechanism underlies the
hardness proofs in \cref{sec:logical-pspace,sec:capacity-pspace}.

\section{Pod-Deployability}
\label{sec:problem}

In this section, we formulate the verification problem studied throughout the
paper. We focus on the basic question of whether one pod type can ever be
deployed on one designated node. This property is sufficient to express, for
example, whether a sensitive workload may cross a placement boundary and run
on an untrusted node. More elaborate properties could require several pods to
be present simultaneously or could prescribe an entire final cluster
configuration. We leave such variants aside and study the simplest
node-specific deployment question, which already exhibits the complexity
boundaries developed in the following sections.

\begin{definition}[Pod-deployability]
Let \(\sigma_0\) be an initial configuration, \(t_\star\in\PodTypes\) a target
pod type, and \(n_\star\in\Nodes\) a target node. The target pod is
\emph{deployable} on \(n_\star\) from \(\sigma_0\) if \(\Deploy(\sigma_0,t_\star,n_\star)\) holds, where
\[
\Deploy(\sigma_0,t_\star,n_\star)
\iff
\exists\sigma:
\sigma_0\longrightarrow^*\sigma
\land
\sigma(t_\star,n_\star)\geq1
\]
where $\longrightarrow^*$ is the reflexive and transitive closure of
$\longrightarrow$.
The decision problem \textnormal{\textsc{Pod-Deployability}} asks whether this condition
holds.
\end{definition}

The definition requires only that the final configuration contains at least one
instance of \(t_\star\) on \(n_\star\). It imposes no condition on the other
pods or nodes. Pod-deployability is therefore a \emph{coverability} problem:
the target specifies a lower bound on one coordinate of the configuration.

Although an execution witnessing pod-deployability may be exponentially long,
the problem can be solved in polynomial space whenever every configuration has
a polynomial-size encoding.
We make the required boundedness assumption
explicit.

\begin{definition}[Capacity-bounded instance]
\label{def:capacityBounded}
An instance is \emph{capacity-bounded} if, for every pod type
\(t\in\PodTypes\) and node \(n\in\Nodes\), there exists a resource dimension
\(r\in\Resources\) such that
$
\texttt{res}(t)(r)>0.
$
The capacity constraint gives the finite bound
\[
B_{t,n}
=
\min_{\substack{r\in\Resources\\\texttt{res}(t)(r)>0}}
\left\lfloor
\frac{C(n)(r)}{\texttt{res}(t)(r)}
\right\rfloor
\]
on the number of simultaneously running instances of type \(t\) on node
\(n\).
\end{definition}

This is a natural assumption for Kubernetes deployments. Every running pod
consumes some finite node resource, even when its manifest omits explicit CPU
or memory requests: it requires runtime and bookkeeping resources and is
subject to node-level limits, such as the maximum number of pods supported by
a node. 
The assumption does not bound how many instances may be created over the
course of an execution. A controller may repeatedly create, delete, and
recreate pods. It bounds only the number of instances represented in one
configuration. 

The numerical value of \(B_{t,n}\) may be exponential in the
input length, but its binary representation is polynomial in that length.
This implies that 
a configuration can be stored in polynomial space. For each
pair \((t,n)\in\PodTypes\times\Nodes\), the capacity constraint implies
$
0\leq\sigma(t,n)\leq B_{t,n}.
$
We store \(\sigma(t,n)\) as a binary counter. This counter requires
$
\left\lceil\log_2(B_{t,n}+1)\right\rceil
$
bits. Since capacities and requests are binary encoded, this number is
polynomial in the input length. There are
\(|\PodTypes|\cdot|\Nodes|\) counters, and therefore the complete
configuration has a polynomial-size representation.

The number of possible configurations is finite and bounded by
\[
K=
\prod_{t\in\PodTypes}
\prod_{n\in\Nodes}
(B_{t,n}+1).
\]
Although \(K\) may be exponential, its binary
length is
polynomial in the input length.

If a target configuration
is reachable, then it is reachable by a path of length less than $K$
that does not repeat any configuration.
Hence, the deployability of a pod can be checked by a nondeterministic
algorithm which guesses at most $K$ transitions.
Notice that this algorithm is in $\mathsf{NPSPACE}$ because it has
to store the current configuration and the counter of the already guessed transitions.
Thanks to Savitch's theorem \cite{savitch1970relationships}, stating that 
\(\mathsf{NPSPACE}=\PSPACE\), we can conclude that 
\textsc{Pod-Deployability} is in \(\PSPACE\)
for capacity-bounded instances.

\begin{theorem}[Polynomial-space pod-deployability]
\label{thm:deployability-pspace}
\textnormal{\textsc{Pod-Deployability}} for capacity-bounded instances is in
\(\PSPACE\).
\end{theorem}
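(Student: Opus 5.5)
The plan is to give a nondeterministic polynomial-space procedure and then apply Savitch's theorem, as sketched just before the statement. I will make each ingredient precise in turn.

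\textbf{Polynomial encoding of configurations.} For a capacity-bounded instance, every \(t\) has some resource \(r\) with \(\texttt{res}(t)(r)>0\). This requirement does not depend on \(n\), so the bound applies on every node. By induction along any execution from \(\sigma_0\), every reachable \(\sigma\) satisfies \(\sigma(t,n)\le B_{t,n}\):
\begin{itemize}
\item \textsc{Deploy} requires \(\used(\sigma,n)+\texttt{res}(t)\le C(n)\), hence \(\sigma(t,n)+1\le B_{t,n}\);
\item \textsc{Delete} only decreases counters.
\end{itemize}
One caveat is that \(\sigma_0\) itself might violate capacity. I will assume the initial configuration is given explicitly in binary, so its counters have polynomial size. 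The invariant I will actually maintain is \(\sigma(t,n)\le\max(B_{t,n},\sigma_0(t,n))\). Each counter then needs \(O(\log B_{t,n}+\log(\sigma_0(t,n)+1))\) bits, which is polynomial because capacities and requests are binary encoded. With \(|\PodTypes|\cdot|\Nodes|\) counters, a whole configuration fits in polynomial space. The number \(K\) of distinct reachable configurations is at most the product of \(\max(B_{t,n},\sigma_0(t,n))+1\) over all pairs, so \(\log_2 K\) is polynomial.

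\textbf{The NPSPACE algorithm.} The algorithm stores the current configuration \(\sigma\), initialised to \(\sigma_0\), together with a step counter of \(\lceil\log_2 K\rceil+1\) bits. At each step it does the following:
\begin{itemize}
\item If \(\sigma(t_\star,n_\star)\ge1\), it accepts.
\item Otherwise, it nondeterministically guesses a rule, a type \(t\) and a node \(n\).
\item For \textsc{Deploy}, it checks \(\Hard(t,n,\sigma)\); for \textsc{Delete}, it checks \(\sigma(t,n)>0\). It rejects on failure and otherwise updates \(\sigma\).
\item It increments the counter and rejects once the counter exceeds \(K\).
\end{itemize}
By the observation at the end of \Cref{subsec:schedulability}, \(\Hard\) is decidable in polynomial time, hence in polynomial space. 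The required quantities \(\used\), \(\mathsf{present}\), the self-affinity test, \(\mathsf{count}\) and \(\mathsf{gmin}\) are all computed directly from the stored binary counters, using polynomially many arithmetic operations on polynomial-size numbers.

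\textbf{Correctness.} Soundness is immediate, since every accepting run traces a legal execution. For completeness, take a shortest witnessing execution. It repeats no configuration, because any loop could be cut out. So it has fewer than \(K\) steps, and the algorithm can guess exactly this execution. Therefore the problem is in \(\mathsf{NPSPACE}=\PSPACE\) by Savitch's theorem~\cite{savitch1970relationships}.

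The main obstacle is the careful bookkeeping around \(\Hard\): I must confirm that the topology-spread computation stays polynomial. Its sums have at most \(|\PodTypes|\cdot|\Nodes|\) terms, each of polynomial bit length, so they do. I must also handle the possibly over-capacity \(\sigma_0\) through the adjusted bound above.
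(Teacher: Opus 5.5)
Your proposal is correct and follows essentially the same route as the paper: bound each counter to get polynomial-size configurations and a count \(K\) of polynomial bit length, argue that a shortest witnessing execution is simple and hence has fewer than \(K\) steps, guess it step by step while checking \(\Hard\), and conclude via Savitch's theorem. Your invariant \(\sigma(t,n)\le\max(B_{t,n},\sigma_0(t,n))\) is a small but welcome refinement, since the paper tacitly assumes that every configuration, including \(\sigma_0\), already respects \(B_{t,n}\).
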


\section{A Polynomial Fragment without Affinity}
\label{sec:polynomial}

We now show that it is sufficient to remove affinity to obtain a fragment where
\textnormal{\textsc{Pod-Deployability}} can be solved in polynomial time.
Formally, we focus on \(\mathsf{AAS}\) be the fragment where pods do not declare
affinity constraints, i.e., : $A^+(t)=\varnothing$ for every $t\in\PodTypes$.
Note that anti-affinity constraints are admitted, i.e., it is possible to have
$t\in\PodTypes$ such that $A^-(t)\neq\varnothing$.

The key property of \(\mathsf{AAS}\) is that if a pod of type \(t\) is feasible
on node \(n\) in a configuration \(\sigma\), then it is feasible also in the
empty configuration \(\zero\).

According to \cref{def:hard}, \(t\) is feasible on node \(n\) in a configuration
\(\sigma\) if $\StaticOK(t,n)$, $\used(\sigma,n)+\texttt{res}(t)\leq C(n)$,
$\mathsf{AffinityOK}(t,n,\sigma)$, and $\bigwedge_{h\in\mathsf{Spread}(t)}
\mathsf{SpreadOK}(h,t,n,\sigma)$. $\StaticOK(t,n)$ is independent of the current
configuration. The inequation $\used(\sigma,n)+\texttt{res}(t)\leq C(n)$ implies
$\texttt{res}(t)\leq C(n)$, because $\used(\sigma,n)$ is a nonnegative vector.
$\mathsf{AffinityOK}(t,n,\zero)$ holds because affinity requirements are not
admitted and anti-affinity requirements are vacuously satisfied in the empty
configuration. Moreover, $\bigwedge_{h\in\mathsf{Spread}(t)}
\mathsf{SpreadOK}(h,t,n,\sigma)$ implies $\bigwedge_{h\in\mathsf{Spread}(t)}
\mathsf{SpreadOK}(h,t,n,\zero)$ because for every topology-spread constraint
\(h=(s,\tau,k,\ell,E)\), $\mathsf{SpreadOK}(h,t,n,\sigma)$ implies $n \in E$,
$\mathsf{gmin}_{\zero}(h)=0$, and in the configuration $\zero$ the spread
inequality reduces to
$
0+\delta(t,s)-0\leq k
$
which holds because $\delta(t,s) \leq 1$ and $k \geq 1$. This implies that a pod
of type \(t\) is feasible on node \(n\) also in the empty configuration
\(\zero\).

The above key property of \(\mathsf{AAS}\) allows us to check
$\Deploy(\sigma_0,t_\star,n_\star)$ simply by verifying whether node \(n_\star\)
already hosts a pod of type \(t_\star\) in the initial configuration
\(\sigma_0\) or $\Hard(t_\star,n_\star,\zero)$ holds.
In fact, if the pod is deployed after a sequence of transitions we have that
$\Hard(t_\star,n_\star,\sigma)$ holds for the configuration $\sigma$ in which
the pod is deployed; as observed above, this implies
$\Hard(t_\star,n_\star,\zero)$. Conversely, if $\Hard(t_\star,n_\star,\zero)$
holds, from the initial configuration $\sigma_0$ we can consider a sequence of
delete transitions that remove all the pods, thus reaching the empty
configuration $\zero$, from which a pod of type \(t_\star\) can be deployed on
node \(n_\star\). %
The above verification procedure, i.e., check the initial configuration or
$\Hard(t_\star,n_\star,\zero)$, takes polynomial time (see the discussion at the
end of \cref{subsec:schedulability}); hence
\textnormal{\textsc{Pod-Deployability}} in \(\mathsf{AAS}\) can be verified in
polynomial time.

\begin{theorem}[Polynomial affinity-free pod-deployability]
\label{thm:poly}
\textnormal{\textsc{Pod-Deployability}} in \(\mathsf{AAS}\) is decidable in polynomial time. 
\end{theorem}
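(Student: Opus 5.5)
The plan is to reduce \(\Deploy(\sigma_0,t_\star,n_\star)\) to the single check
\[
\sigma_0(t_\star,n_\star)\geq1 \;\lor\; \Hard(t_\star,n_\star,\zero),
\]
and then observe that this check is polynomial. The argument has three steps.

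\textbf{Step 1: monotonicity lemma.} For every \(\sigma\), if \(\Hard(t,n,\sigma)\) holds then \(\Hard(t,n,\zero)\) holds. I would prove this conjunct by conjunct, following \cref{def:hard}.
\begin{itemize}
\item The capacity inequality \(\used(\sigma,n)+\texttt{res}(t)\leq C(n)\) implies \(\texttt{res}(t)\leq C(n)\), since \(\used(\sigma,n)\geq\mathbf{0}\) componentwise. This is exactly the condition \(\used(\zero,n)+\texttt{res}(t)\leq C(n)\).
\item \(\StaticOK(t,n)\) does not depend on the configuration.
\item For \(\mathsf{AffinityOK}\), \(A^+(t)=\varnothing\) in \(\mathsf{AAS}\), so the affinity conjunction is empty. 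Each anti-affinity conjunct \(\neg\mathsf{present}_\zero(s,\tau,n)\) holds because \(\zero(u,m)=0\) for all \(u,m\).
\item For each \(h=(s,\tau,k,\ell,E)\in\mathsf{Spread}(t)\), the membership \(n\in E\) carries over. Every \(\mathsf{count}_\zero\) is \(0\), so \(\mathsf{gmin}_\zero(h)=0\) in both branches of its definition. The inequality becomes \(\delta(t,s)\leq k\), which holds since \(\delta\leq1\leq k\).
\end{itemize}
This is the main step, and the one I expect to need most care. The subtle point is the spread constraint: the left-hand side is not monotone in \(\sigma\) in general, because \(\mathsf{gmin}\) can shrink. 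That does not matter here, because at \(\zero\) we can bound the whole expression directly.

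\textbf{Step 2: correctness of the check.}

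For soundness, suppose \(\sigma_0\longrightarrow^*\sigma\) with \(\sigma(t_\star,n_\star)\geq1\). Either \(\sigma_0(t_\star,n_\star)\geq1\) already holds, or some \textsc{Deploy} step along the path adds \((t_\star,n_\star)\) from some configuration \(\sigma'\). In the latter case \(\Hard(t_\star,n_\star,\sigma')\) holds, and Step 1 gives \(\Hard(t_\star,n_\star,\zero)\). The only rules are \textsc{Deploy} and \textsc{Delete}, so the counter at \((t_\star,n_\star)\) can only become positive through such a step.

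For completeness, if \(\sigma_0(t_\star,n_\star)\geq1\) we take the empty path. Otherwise we apply \textsc{Delete} repeatedly, with no side condition beyond positivity, to reach \(\zero\). This takes \(\sum\sigma_0(t,n)\) steps, which may be exponentially many but only existence matters. We then apply \textsc{Deploy} using \(\Hard(t_\star,n_\star,\zero)\).

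\textbf{Step 3: complexity.} Reading \(\sigma_0(t_\star,n_\star)\) takes polynomial time. Evaluating \(\Hard(t_\star,n_\star,\zero)\) is polynomial by the discussion at the end of \cref{subsec:schedulability}: binary comparison of \(\texttt{res}\) against \(C\), a linear-time \(\StaticOK\) check, trivially true anti-affinity checks, and membership tests \(n\in E_h\). Hence \textsc{Pod-Deployability} in \(\mathsf{AAS}\) is decidable in polynomial time, with no capacity-boundedness assumption needed.
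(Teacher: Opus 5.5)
Your proposal is correct and follows the paper's proof essentially step for step. It uses the same conjunct-by-conjunct argument that feasibility in any $\sigma$ implies $\Hard(t,n,\zero)$ in $\mathsf{AAS}$, the same reduction to checking $\sigma_0(t_\star,n_\star)\geq1$ or $\Hard(t_\star,n_\star,\zero)$ (soundness via the deploying step, completeness by deleting everything and then deploying), and the same appeal to polynomial-time evaluation of $\Hard$; your extra remarks on the non-monotonicity of $\mathsf{gmin}$ and on not needing capacity-boundedness are accurate but do not change the argument.
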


\section{PSPACE with Affinity and Anti-Affinity}
\label{sec:logical-pspace}

We now isolate the expressive power of inter-pod affinity constraints. In this section,
we consider a capacity-free fragment in which the only active scheduling
constraints are required inter-pod affinity and %
anti-affinity. All resource requests are zero, all static node filters are
true, and no topology-spread constraints are present. The reduction uses a
single node, and therefore all affinity and anti-affinity terms use the
hostname topology. 

We show that this fragment is already \(\PSPACE\)-complete by encoding
coverability of \emph{1-safe} Petri nets. We first recall the source problem, then
present the encoding and its intuition and argue that it is polynomial.

\noindent
\textbf{1-safe Petri nets.}
A 1-safe net is a Petri net in which every reachable marking contains at most one
token in each place. Formally, a 1-safe net
is a tuple
$
\mathcal P=(P,T,\mathsf{Pre},\mathsf{Post},M_0),
$
where \(P\) and \(T\) are finite sets of places and transitions. For each
transition \(a\in T\), the sets
\(\mathsf{Pre}(a),\ \mathsf{Post}(a)\subseteq P\) are its input and output
places. The initial marking \(M_0\subseteq P\) indicates places
that initially contain one token.
In 1-safe nets, every reachable marking contains at most one
token in each place and can therefore be identified with the set of places
\(M\subseteq P\) containing a token.
A transition \(a\) is enabled at \(M\) when
$
\mathsf{Pre}(a)\subseteq M.
$
Firing \(a\) consumes the tokens in its input places and produces tokens in
its output places:
$
M\xrightarrow{a}
(M\setminus\mathsf{Pre}(a))
\cup\mathsf{Post}(a).
$
The source problem used in the reduction is \emph{goal coverability}: given a
1-safe Petri net and a distinguished place \(g\in P\), determine whether some
reachable marking contains \(g\). 
This problem is \(\PSPACE\)-complete~\cite{cheng1995safe}.\footnote{Goal coverability
is a simplified version, but with the same complexity, of the classical coverability 
problem for 1-safe Petri nets: coverability considers a set of target places instead of only one goal place.}

\noindent
\textbf{Encoding.}
Let
$
\mathcal{P}=(P,T,\mathsf{Pre},\mathsf{Post},M_0)
$
be a 1-safe Petri net with distinguished goal place \(g\). We construct a
Kubernetes instance consisting of a single node. All pod types have zero
resource requests.
A token in a place \(p\) is represented by the presence of a pod carrying the
label \(\mathsf{place}=p\). Several pod types may carry the same label
\(\mathsf{place}=p\). For every transition
\(a\in T\), we partition the places involved in its firing into
$\mathsf{Consumed}(a)$ (i.e., $\mathsf{Pre}(a)\setminus\mathsf{Post}(a)$),
$\mathsf{Preserved}(a)$ (i.e., $\mathsf{Pre}(a)\cap\mathsf{Post}(a)$),
and $\mathsf{Produced}(a)$ (i.e., $\mathsf{Post}(a)\setminus\mathsf{Pre}(a)$).
Tokens in \(\mathsf{Consumed}(a)\) are removed during the simulation of \(a\),
tokens in \(\mathsf{Preserved}(a)\) remain present throughout, and tokens in
\(\mathsf{Produced}(a)\) are newly created.

The simulation is driven by a four-phase
\emph{start-execute-end-ready} protocol for each transition. A
\(\mathsf{ready}\) label marks a stable control state in which a new
transition may be started. The initial configuration contains one initial
ready pod and one initial token pod \(I_p\) for every \(p\in M_0\). Both
initial types require affinity to a fresh label \(\bot\) that no pod type
carries, so they can be deleted but never redeployed.

For every transition \(a\), a start type \(S_a\) requires affinity to a ready
pod and to a token pod for every \(p\in\mathsf{Pre}(a)\), which is deployable
when \(a\) is enabled. Its anti-affinity to the global \(\mathsf{start}\) and
\(\mathsf{end}\) labels serializes protocols. Once \(S_a\) is deployed, the
ready pod and the tokens in \(\mathsf{Consumed}(a)\) are deleted; tokens in
\(\mathsf{Preserved}(a)\) are left untouched. An execute type \(E_a\) then
requires affinity to \(S_a\) and anti-affinity to ready and to every consumed
place, so it appears only after the consumed inputs have been removed. While
\(E_a\) is present, an output-token type \(P_{p,a}\) for every
\(p\in\mathsf{Produced}(a)\) becomes deployable, so that it carries
\(\mathsf{place}=p\) and is anti-affine to \(\mathsf{place}=p\), ensuring at
most one token per place. An end type \(F_a\) requires affinity to \(E_a\) and
to every place in \(\mathsf{Post}(a)\), certifying that all outputs are present.
Finally, a ready type \(R_a\) requires affinity to \(F_a\) and anti-affinity to
the start and execute labels, so it appears only after the management pods of
the finished protocol have been deleted, and restores a stable configuration
whose token projection is exactly
\((M\setminus\mathsf{Pre}(a))\cup\mathsf{Post}(a)\).

A final subtlety is that Kubernetes permits arbitrary pod deletions, so a
computation may lose tokens.
This is not a problem: Petri-net enabling is monotone in the
marking, so any execution interleaved with token losses can be lifted to a
loss-free execution reaching a larger marking.
Thus, if the goal place is marked by an execution with token losses,
it will be marked also by the corresponding loss-free execution.

The construction introduces a polynomial number of pod types and constraints:
one initial token type per initially marked place, one output-token type for
every transition-output-place pair, and a constant number of control types
(start, execute, end, ready) per transition, plus one goal type. Labels and
selectors use indices from \(P\cup T\). The encoding function is therefore
computable in time polynomial in the size of \(\mathcal P\).

Finally, notice that although the fragment has no resource capacity, pod-deployability remains in
\(\PSPACE\). Affinity and anti-affinity inspect only whether a matching pod is
present; they do not inspect its exact multiplicity. Consequently, a
configuration can be abstracted by its \emph{support}
$
\mathsf{supp}(\sigma)
=
\{(t,n)\in\PodTypes\times\Nodes\mid \sigma(t,n)>0\}.
$
Adding another instance of an already present type,
or deleting one instance when another remains, does not
change any affinity or anti-affinity guard.
Multiplicities can therefore be
saturated at one, and there are at most
\(2^{|\PodTypes|\cdot|\Nodes|}\) support configurations. A nondeterministic
algorithm stores the current support, guesses one deployment or deletion
transition, checks its guards in polynomial time, and searches for at most
this many configurations. The algorithm uses polynomial space, and hence
pod-deployability belongs to \(\mathsf{NPSPACE}=\PSPACE\) by Savitch's
theorem \cite{savitch1970relationships}.

\begin{theorem}[Affinity and anti-affinity fragment]
\label{thm:logical-pspace}
\textnormal{\textsc{Pod-Deployability}} is \(\PSPACE\)-complete when the only active scheduling
constraints are required inter-pod affinity and required inter-pod
anti-affinity. Hardness already holds for one node.
\end{theorem}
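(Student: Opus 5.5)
Membership and hardness are handled separately. For membership, we use the support abstraction described just before the statement. Affinity and anti-affinity guards depend on $\sigma$ only through $\mathsf{supp}(\sigma)$; this includes the self-affinity bootstrap test, which asks only whether some matching pod exists. So the relation $\sigma\to\sigma'$ projects to a relation on supports. Two points need checking. First, a redundant deploy leaves the support unchanged. Second, every support-level step can be realised concretely: a deletion that empties a coordinate is realised by first deleting all surplus copies. Hence coverability of $(t_\star,n_\star)$ is decided by a nondeterministic walk of at most $2^{|\PodTypes||\Nodes|}$ steps in polynomial space, and Savitch's theorem gives $\PSPACE$, exactly as in \cref{thm:deployability-pspace}.

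For hardness, we reduce goal coverability in 1-safe nets~\cite{cheng1995safe} via the four-phase encoding, using a single node and hostname topology. The goal type $G$ requires affinity to $\mathsf{place}=g$. We then prove both directions.

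\emph{Completeness.} We induct on firing sequences $M_0\xrightarrow{a_1}\cdots M_k$, maintaining the invariant that there is a reachable \emph{stable} configuration consisting of exactly one ready pod and one token pod with label $p$ for each $p\in M_i$. To fire $a$ from such a configuration, we deploy $S_a$ (enabled since $\mathsf{Pre}(a)\subseteq M$ and no start or end pod is present), then delete the ready pod and the consumed tokens. Next we deploy $E_a$, then each $P_{p,a}$. The anti-affinity of $P_{p,a}$ to $\mathsf{place}=p$ is satisfiable because 1-safety gives $p\notin M\setminus\mathsf{Pre}(a)$. We then deploy $F_a$, delete $S_a$ and $E_a$, deploy $R_a$, and delete $F_a$. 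None of the types should conflict with its own presence; we check that the self-affinity exception never bites, since no type matches its own affinity selectors except where intended.

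\emph{Soundness.} This is the main obstacle, because arbitrary interleavings and deletions must be tamed. We define an abstraction map from reachable configurations to pairs consisting of a control phase (stable, or inside the protocol of $a$ at phase start, execute, produce, or end) and a token set $\mu(\sigma)=\{p\mid$ some pod with $\mathsf{place}=p$ is present$\}$. We then prove by induction on executions an invariant of the following form: there exists a marking $M$ reachable in $\mathcal P$ with $\mu(\sigma)\subseteq M'$ for the marking $M'$ associated with the current phase.
\begin{itemize}
\item In the stable phase, $M'=M$.
\item After $S_a$, $M'=M$ with $\mathsf{Pre}(a)\subseteq M$.
\item After $E_a$, we are mid-firing of $a$, and tokens in $\mathsf{Post}(a)$ can be attributed to the firing.
\end{itemize}
The serialization argument is the crux. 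Anti-affinity of $S_b$ to the start and end labels forbids two protocols from overlapping. The anti-affinity of $E_a$ to ready and the consumed places guarantees that the consumed tokens are really absent, so they cannot be reused. $R_a$'s anti-affinity to the start and execute labels ensures the protocol is closed before a new ready pod exists. The initial pods cannot be recreated, by the $\bot$ affinity. Stray managerial pods, such as a lingering $F_a$, or an $R_a$ deployed while $F_a$ survives, need a case analysis to show they enable nothing new. I would first try to prove that each $P_{p,a}$ deployment requires $E_a$ present, and that $E_a$ present implies the ``mid-$a$'' phase.

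Deletions only shrink $\mu(\sigma)$. Monotonicity of Petri-net enabling then lets the invariant be stated with $\subseteq$, so token losses are harmless. Finally, deploying $G$ requires $g\in\mu(\sigma)\subseteq M'$. If $M'$ is a mid-firing marking, completing the firing gives a reachable marking containing $g$. Since the construction is polynomial, this yields hardness for one node.
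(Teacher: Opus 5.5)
Your membership argument, the source problem, the encoding, and the one-firing completeness simulation coincide with the paper's proof. Your remark that a support-level deletion is realised by first removing surplus copies is a detail the paper leaves implicit.

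Soundness is packaged differently. The paper splits an execution into completed start--execute--end--ready intervals plus at most one unfinished interval (its protocol-decomposition lemma), projects this to a \emph{lossy} Petri run, and lifts it with a separate loss-lifting lemma. It handles a goal token produced in the unfinished interval by appending the firing of $a$. You instead carry one forward invariant, $\mu(\sigma)\subseteq M'$ for some reachable $M$. Losses are then absorbed by monotonicity at each step, and the unfinished interval becomes your ``mid-firing'' case with $M'=a(M)$. Your version avoids trace splitting; the paper's isolates the Petri-net reasoning in a reusable lemma.

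As written, your soundness part is a plan rather than a proof, and two points need fixing.

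\emph{The phase must be tracked along the execution, not read off the configuration.} A configuration containing $S_a$ and no execute pod arises in two ways: before $E_a$ is deployed (tokens $\subseteq M$), and after $E_a$ was deployed and deleted. In the second case, tokens for $\mathsf{Post}(a)\setminus\mathsf{Pre}(a)$ are present, and 1-safety keeps them out of $M$. For example, take one transition $a$ with $\mathsf{Pre}(a)=\{p\}$, $\mathsf{Post}(a)=\{r\}$, $M_0=\{p\}$. The configuration containing only $S_a$ and $P_{r,a}$ is reachable, yet no reachable marking contains both $p$ and $r$, so your clause ``$M'=M$ with $\mathsf{Pre}(a)\subseteq M$'' fails there. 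Either make the phase history-dependent or allow $\mu(\sigma)\subseteq a(M)$ in that clause.

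\emph{The deferred case analysis must be carried out.} It is exactly the content of the paper's decomposition lemma, and it closes with three facts:
\begin{enumerate}
\item Ready and execute pods never coexist: $E_b$ is anti-affine to ready, $R_b$ is anti-affine to execute, and neither kind coexists initially. This one is proved by induction.
\item A start is deployed only when a ready pod is present and no end or other start pod is present.
\item A ready pod can only be created as some $R_c$, whose witness $F_c$ requires $E_c$, which in turn requires $S_c$. The $\bot$ affinity blocks $R_0$.
\end{enumerate}
Consequently, within the interval opened by $S_a$ the only execute pod that can appear is $E_a$. Once ready has been deleted, no new start is possible until $F_a$ and $R_a$ close the protocol. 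A lingering $F_a$ blocks every start and witnesses only $R_a$.

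A minor point on completeness: you deploy $F_a$ while $S_a$ is still present. This is legal under the incoming-only $\mathsf{AffinityOK}$ of the model. The paper deletes $S_a$ first, which also works if $S_a$'s anti-affinity to end were enforced against later pods, as the appendix's ``resident anti-affinity'' wording suggests.
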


\section{PSPACE with Affinity and One Capacity}
\label{sec:capacity-pspace}

The preceding section showed that required affinity and anti-affinity alone
already yield \(\PSPACE\)-completeness, without relying on resource
capacities. We now consider a
complementary fragment: anti-affinity is prohibited, and the only dynamic
constraints are required inter-pod affinity and one scalar resource capacity.
We can show that this fragment is also \(\PSPACE\)-complete, even if we can use a single node and unit resource requests. Thus, neither anti-affinity nor
multiple resources, nodes, or topology domains are needed for the lower bound.

We show that this fragment is already \(\PSPACE\)-complete by encoding the
bounded black pebbling game.
We first recall the source problem, then
present the encoding and its intuition and argue that it is polynomial.

\noindent
\textbf{Bounded black pebbling game.}
Let \(G=(V,E)\) be a finite directed acyclic graph (DAG). For a vertex \(v\in V\), let
$
\mathsf{pred}(v)=\{u\in V\mid (u,v)\in E\}
$
be the set of its immediate predecessors. A \emph{pebble configuration} is a set
\(X\subseteq V\) such that a vertex belongs to \(X\) when it currently carries a
pebble.
The black-pebble game starts from the empty configuration. It permits the
following moves.
\begin{itemize}
\item A pebble may be placed on an unpebbled vertex \(v\) when every immediate
predecessor of \(v\) is currently pebbled:
$
v\notin X $ and $
\mathsf{pred}(v)\subseteq X.
$
The resulting configuration is \(X\cup\{v\}\).

\item A pebble may be removed from any %
vertex \(v\in X\). The resulting
configuration is \(X\setminus\{v\}\).
\end{itemize}
Source vertices have no predecessors and may therefore be pebbled at any time.
Removing a predecessor does not remove pebbles that were placed using it; it
may only prevent future placements until that predecessor is reconstructed.

In the bounded game, every configuration must contain at most \(k\) pebbles.
An instance of \emph{bounded black-pebble coverability} consists of a DAG
\(G=(V,E)\), a target vertex \(z\in V\), and a bound \(k\). The question is
whether the empty configuration can reach some pebble configuration containing
\(z\), without ever using more than \(k\) pebbles. This problem is
\(\PSPACE\)-complete \cite{gilbert1979pebbling}.

\noindent
\textbf{Encoding.}
Let \((G,z,k)\) be a bounded black-pebble coverability instance, with
\(G=(V,E)\). We construct a Kubernetes instance with a single node \(n\) and
one scalar resource of capacity \(C(n)=k\). For every vertex \(v\in V\) we
introduce one pod type \(P_v\) that requests one unit of the resource and
carries the unique label \(\mathsf{id}=v\). For every edge \((u,v)\in E\), the
type \(P_v\) has one required %
pod-affinity term whose selector matches
only \(\mathsf{id}=u\); hence, \(P_v\) can be deployed only while an
instance of every predecessor type \(P_u\) with \(u\in\mathsf{pred}(v)\) is
running. Source types have no affinity terms and are deployable whenever one
unit of capacity is free. The initial configuration is empty, and the
pod-deployability target is \((P_z,n)\).

The intuition is direct. The presence of an instance of \(P_v\) represents a
pebble on \(v\): deploying the first \(P_v\) instance is pebble placement, deleting 
its last instance is pebble removal, and the node capacity enforces the pebble bound.
Required affinity expresses %
the predecessor prerequisite, and, because
affinity is checked only at scheduling time, a pod may outlive the
prerequisites that enabled it, mirroring that removing a predecessor does not
remove pebbles placed using it.

A final subtlety is that, in the absence of anti-affinity constraints,
Kubernetes permits contemporaneous deployments of several instances of
the pods of type \(P_v\).
This is not a problem: any sequence of pod deployments and deletions
which deploys the target pod $P_z$, has a corresponding correct
sequence obtained by removing the deployments of instances of 
already present \(P_v\) pods.

The construction uses one node, one scalar capacity, one pod type per vertex,
and one affinity term per edge, so its size is \(O(|V|+|E|+\log k)\) and it is
computable in polynomial time.
Note that the produced instance is capacity-bounded hence
pod-deployability is in \(\PSPACE\) by \cref{thm:deployability-pspace}.

\begin{theorem}[Affinity and One-capacity fragment]
\label{thm:capacity-pspace}
\textnormal{\textsc{Pod-Deployability}} is \(\PSPACE\)-complete 
for a capacity-bounded fragment when %
the only scheduling
constraint is inter-pod required affinity. Hardness already holds for one node,
one scalar resource and unit pod requests.
\end{theorem}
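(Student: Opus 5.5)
Membership follows from \cref{thm:deployability-pspace}: the instances produced by the reduction give every pod type a unit request on the single resource, so they are capacity-bounded. The same holds for the whole fragment, provided we read ``capacity-bounded'' as requiring a positive request for every pod type. Hardness comes from a polynomial-time reduction from bounded black-pebble coverability \cite{gilbert1979pebbling}, using the encoding above: one node \(n\), one resource with \(C(n)=k\), a type \(P_v\) with unit request for each \(v\in V\), and an affinity term \((\mathsf{id}=u,\text{hostname})\) in \(A^+(P_v)\) for each \((u,v)\in E\). The start configuration is \(\zero\) and the target is \((P_z,n)\). Because \(G\) is acyclic, no \(P_v\) matches its own selectors, so the self-affinity bootstrap exception never applies. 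Hence \(\mathsf{AffinityOK}_\sigma(P_v,n)\) holds exactly when \(\sigma(P_u,n)>0\) for every \(u\in\mathsf{pred}(v)\).

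\textbf{Soundness of the pebbling direction.} We show that a pebbling strategy yields a deployment sequence. Map each pebble configuration \(X\) to \(\sigma_X\), where \(\sigma_X(P_v,n)=1\) if and only if \(v\in X\). A placement on \(v\) with \(\mathsf{pred}(v)\subseteq X\) and \(|X\cup\{v\}|\le k\) gives \(\used(\sigma_X,n)+1\le k\) and satisfies every affinity term, so \textsc{Deploy} applies. A removal corresponds to \textsc{Delete}. Induction on the length of the strategy then gives \(\zero\longrightarrow^*\sigma\) with \(\sigma(P_z,n)\ge1\).

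\textbf{Converse: a deployment sequence yields a pebbling strategy.} Take any run \(\zero=\sigma_0\longrightarrow\cdots\longrightarrow\sigma_m\) with \(\sigma_m(P_z,n)\ge1\), and project it onto supports: \(X_i=\{v\mid\sigma_i(P_v,n)>0\}\). The invariant is that consecutive supports are either equal or differ by one legal pebbling move. If step \(i\) deploys \(P_v\), the affinity guard gives \(\mathsf{pred}(v)\subseteq X_i\). In that case either \(X_{i+1}=X_i\), because \(v\) was already present, or \(X_{i+1}=X_i\cup\{v\}\), a legal placement. If step \(i\) deletes \(P_v\), then either \(X_{i+1}=X_i\) or \(X_{i+1}=X_i\setminus\{v\}\). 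For the bound, every instance costs one unit, so \(|X_i|\le\sum_v\sigma_i(P_v,n)=\used(\sigma_i,n)\le k\). Deleting the stuttering steps leaves a legal \(k\)-bounded pebbling from \(\emptyset\) that reaches a configuration containing \(z\).

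\textbf{Main obstacle.} This is exactly the multiplicity subtlety: duplicate instances must be shown harmless. The support projection handles it, since duplicates only use capacity and never enlarge the support beyond the pebble count, while affinity sees only presence. Two routine checks remain. First, the reduction is polynomial: size \(O(|V|+|E|+\log k)\). Second, the instance uses one node, one scalar resource, unit requests and no anti-affinity, as the statement claims.
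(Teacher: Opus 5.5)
Your proposal is correct and follows the paper's proof essentially step by step. It uses the same single-node, capacity-$k$ encoding with one unit-request type per vertex and one affinity term per edge, completeness via the canonical $0/1$ configurations, soundness via the support projection (each transition either stutters or is a legal bounded pebble move, with $|X_i|\le\used(\sigma_i,n)\le k$), and membership from \cref{thm:deployability-pspace}. Your observation that acyclicity and the uniqueness of the $\mathsf{id}$ labels rule out the self-affinity bootstrap exception is left implicit in the paper; it is exactly what justifies reading the affinity guard as $\mathsf{pred}(v)\subseteq X_i$.
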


\section{Related Work and Conclusion}
\label{sec:related}
\label{sec:conclusion}

Formal models of Kubernetes have previously been proposed for purposes other than reasoning about scheduling reachability. In particular, Turin et al.~\cite{DBLP:journals/jss/TurinBDDJT23} develop a model of Kubernetes in Real-Time ABS to capture the resource-sensitive behaviour of containerized microservices and predict CPU and memory consumption under different deployment scenarios. Their abstraction is aimed at resource modelling and performance prediction, rather than at formalizing the scheduler's placement decisions and the constraints that determine whether a pod can be assigned to a node. More generally, existing work on Kubernetes formalization does not, to the best of our knowledge, study the computational complexity induced by the scheduler's placement semantics. Our work takes a complementary perspective: we formalize the hard scheduling constraints that govern pod placement and their evolution under deployments and deletions, and study the resulting verification problem. To the best of our knowledge, this is the first formal analysis of Kubernetes pod-deployability and of its computational complexity.

At first sight, anti-affinity appears to be the natural source of hardness:
it is a negative test, since deployment is enabled by the \emph{absence} of
matching workloads.  This intuition is consistent with classical
computational models, in which adding zero or absence tests can substantially
increase expressive power.  For example, ordinary Petri-net coverability is
\(\mathsf{EXPSPACE}\)-hard~\cite{lipton1976reachability},
but it becomes undecidable 
in the presence of inhibitor arcs able to test the absence of tokens~\cite{hack1976petri}.  Our results
show that negative tests do not imply a similar complexity jump
in pod scheduling.  On the other hand, required affinity, which is 
capable of checking the presence of selected pods, suffices 
for \(\PSPACE\)-hardness already when combined with one scalar capacity.  
The work closest to ours is the analysis of APP and aAPP by De Palma et
al.~\cite{depalma2026ifm}.  APP is a language for specifying scheduling
policies for Function-as-a-Service platforms, and aAPP extends it with notions
of affinity and anti-affinity similar to ours.  Their reachability
question---whether a function can be scheduled on a designated worker---is
therefore closely related to pod-deployability.  They prove a linear-time
bound for the anti-affinity-only fragment and
\(\PSPACE\)-completeness for the full language containing both affinity and
anti-affinity, using a reduction from propositional
planning~\cite{bylander1994planning}.  For the affinity-only fragment,
however, they establish NP-hardness but leave its exact complexity open.
Our results sharpen and extend this picture in two directions.  First, our
formal language reflects Kubernetes placement and supports node-label
restrictions, multidimensional resources, topology-aware affinity and
anti-affinity, and hard topology-spread constraints.  In particular, affinity
is not restricted to the candidate node: a pod may require a matching witness
anywhere in the node's zone or in another labelled topology domain.  Second,
our lower bounds already hold for deliberately minimal fragments.  The
reduction from 1-safe Petri nets establishes \(\PSPACE\)-hardness using only
required affinity and anti-affinity, without capacities or spread
constraints.  More significantly, the bounded-pebbling reduction establishes
\(\PSPACE\)-hardness with required affinity alone, on a single node and with a
single scalar capacity.  Thus, for our scheduler model, it closes the
affinity-only complexity gap identified for aAPP and shows that anti-affinity
is not necessary for \(\PSPACE\)-hardness.

Pod-deployability naturally lends itself to a formulation as a coverability
problem, since it asks whether some reachable configuration contains at least
one instance of a target pod type on a designated node, without constraining the
remainder of the cluster. Exact reachability would instead prescribe the
complete target configuration, including the absence and multiplicity of every
other pod type. This distinction parallels that between coverability and
reachability in Petri nets, where the two problems may have substantially
different complexities~\cite{cheng1995safe,lipton1976reachability}. An
intermediate and practically relevant generalization is \emph{co-occurrence}:
can two designated pod types be simultaneously present on the same node? This is
a coverability query over two type-node coordinates and corresponds closely to
the co-occurrence property studied for aAPP scheduling policies by De Palma et
al.~\cite{depalma2026ifm}. Our lower bounds extend immediately to this problem
by adding an unconstrained marker pod and asking whether it co-occurs with the
original target. Moreover, the polynomial-space upper bound is preserved:
capacity-bounded configurations still have polynomial-size representations,
while the capacity-free affinity/anti-affinity fragment can still be explored
through its finite support abstraction. We therefore expect co-occurrence to be
\(\PSPACE\)-complete in both of our hard fragments. Exact configuration
reachability deserves a separate analysis. When capacities bound every pod
multiplicity, the configuration graph is finite and each state has a
polynomial-size representation, so the same exploration argument gives a
\(\PSPACE\) upper bound. Without binding capacities, however, arbitrarily many
instances of a zero-request pod type may be deployed on the same node. Our
support abstraction identifies all positive multiplicities: configurations
containing one, two, or one million instances of a type have the same support
and satisfy the same affinity and anti-affinity constraints. This identification
is sound for pod-deployability because that problem only asks whether at least
one target instance is present. It is not sound for exact reachability, which
must distinguish, for example, a target configuration containing exactly one
instance from one containing two. Consequently, the finite support graph used in
our \(\PSPACE\) argument loses information needed by exact reachability. Whether
these unbounded multiplicities can be handled by a different finite abstraction,
and what complexity results, are left for future work.

Our study also relates to the broader cluster-scheduling literature, which
provides the architectural setting but generally asks different questions.
Kubernetes descends from large-scale cluster managers such as
Borg~\cite{verma2015borg}. In this area, schedulers like Omega investigated
shared-state optimistic scheduling~\cite{schwarzkopf2013omega}, while Firmament
formulated placement as a min-cost-flow problem~\cite{gog2016firmament}---Burns
et al.~\cite{burns2016borg} provide a detailed description of the lineage from
Borg and Omega to Kubernetes. This line of work primarily concerns scheduler
architecture, placement quality, usage, and scalability. Instead, we study the
complexity of an existential verification question induced by Kubernetes' hard
filters. The Kubernetes documentation specifies the scheduling framework and the
placement constraints abstracted by our semantics
\cite{k8s-framework,k8s-assigning}. Our model isolates the hard scheduling-time
conditions relevant to pod-deployability and deliberately omits scoring,
transient failures, preemption, and plugin-internal state.

\noindent
\textbf{Conclusion.}
We introduced pod-deployability as the coverability problem of determining
whether some legal sequence of deployments and deletions can place a pod of a
given type on a designated node.  Our results identify a sharp complexity
boundary.  Without affinity and without binding capacity, pod-deployability is
decidable in polynomial time, even in the presence of anti-affinity and hard
topology-spread constraints.  With required affinity and anti-affinity it is
\(\PSPACE\)-complete even without capacities.  Finally, with required affinity
alone it remains \(\PSPACE\)-complete on one node with one scalar capacity.
Together, these results show that positive placement dependencies, rather than
only negative exclusion tests, are sufficient to make long deployment and
deletion histories computationally significant.

For security-sensitive placement policies, inspecting each %
constraint in
isolation is therefore insufficient: an unsafe placement may become feasible
only after a long sequence of creations and deletions, even when it is absent
from the current cluster state.  At the same time, the polynomial fragment
shows that efficient verification is possible when no positive prerequisite
must be maintained.   A natural direction for future work is to incorporate
cluster autoscaling, allowing nodes to be dynamically added and removed.
Because this yields an unbounded state space, it raises new questions about the
decidability and complexity of pod-deployability.

\bibliographystyle{plainurl}
\bibliography{references}
\newpage
\appendix

\section{Proofs of Pod-Deployability without affinity constraints}
\label{app:deployability}

In this appendix, we give the full proof of \cref{thm:deployability-pspace},
which states that \textsc{Pod-Deployability} is in \(\PSPACE\) for
capacity-bounded instances. We first show that every configuration of a
capacity-bounded instance admits a polynomial-size representation
(\cref{lem:poly-configuration}), then bound the number of distinct
configurations and the length of shortest witnessing executions
(\cref{lem:configuration-count,lem:short-witness}), and finally combine these
facts in a nondeterministic polynomial-space decision procedure
(\cref{thm:deployability-pspace-proof}).

\begin{lemma}[Polynomial-size configurations]
\label{lem:poly-configuration}
For a capacity-bounded instance, every configuration \(\sigma\) can be stored
in polynomial space.
\end{lemma}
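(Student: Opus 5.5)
The plan is to bound every coordinate of \(\sigma\) using the capacity constraint and then store each coordinate as a binary counter. The argument has three steps.

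\textbf{Step 1: invariant.} I first show by induction on the length of an execution that every configuration \(\sigma\) satisfies \(\used(\sigma,n)\leq C(n)\) for every node \(n\). For the base case, the empty configuration \(\zero\) trivially satisfies it. An initial configuration \(\sigma_0\) is assumed well-formed, meaning it respects capacities. For the inductive step, a \textsc{Deploy} step requires \(\used(\sigma,n)+\texttt{res}(t)\leq C(n)\) via \(\Hard\), and a \textsc{Delete} step only decreases \(\used\) componentwise. The lemma concerns configurations arising in the instance, so this invariant is the only semantic ingredient needed.

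\textbf{Step 2: per-coordinate bound.} Fix \((t,n)\). By capacity-boundedness (\cref{def:capacityBounded}) there is some \(r\) with \(\texttt{res}(t)(r)>0\). Since every summand in \(\used(\sigma,n)(r)\) is nonnegative,
\[
\sigma(t,n)\,\texttt{res}(t)(r)\leq\used(\sigma,n)(r)\leq C(n)(r),
\]
which gives \(\sigma(t,n)\leq\lfloor C(n)(r)/\texttt{res}(t)(r)\rfloor\). Taking the minimum over all such \(r\) yields \(\sigma(t,n)\leq B_{t,n}\).

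\textbf{Step 3: encoding size.} Each \(\sigma(t,n)\) is stored in \(\lceil\log_2(B_{t,n}+1)\rceil\) bits. Choosing any \(r\) with \(\texttt{res}(t)(r)>0\), we have \(B_{t,n}\leq C(n)(r)\) because \(\texttt{res}(t)(r)\geq1\). Hence the bit count is at most \(\lceil\log_2(C(n)(r)+1)\rceil\), which is at most the length of the binary encoding of \(C(n)(r)\) and therefore at most the input length \(N\). Summing over the \(|\PodTypes|\cdot|\Nodes|\leq N^2\) pairs gives \(O(N^3)\) bits. A fixed ordering of the pairs, with each counter padded to its fixed width, makes the representation uniform.

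The main, though mild, obstacle is Step 1: making explicit that configurations reached from \(\sigma_0\) respect capacities, which requires the assumption that \(\sigma_0\) itself does. The rest is arithmetic.
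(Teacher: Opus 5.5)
Your proof is correct and follows the paper's argument: bound each $\sigma(t,n)$ by $B_{t,n}$ using the capacity constraint, store it as a binary counter of $\lceil\log_2(B_{t,n}+1)\rceil$ bits, and note that there are only $|\PodTypes|\cdot|\Nodes|$ counters. Your Step~1 invariant and the estimate $B_{t,n}\leq C(n)(r)$ spell out what the paper leaves implicit, since it tacitly assumes that configurations respect capacities; incidentally, you could drop the assumption on $\sigma_0$, because the \textsc{Deploy} guard alone yields $\sigma(t,n)\leq\max(\sigma_0(t,n),B_{t,n})$ along any execution.
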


\begin{proof}
For each pair \((t,n)\in\PodTypes\times\Nodes\), the capacity constraint
implies
\[
0\leq\sigma(t,n)\leq B_{t,n}.
\]
We store \(\sigma(t,n)\) as a binary counter. This counter requires
\[
\left\lceil\log_2(B_{t,n}+1)\right\rceil
\]
bits. Since capacities and requests are encoded in binary, this number is
polynomial in the input length. There are
\(|\PodTypes|\cdot|\Nodes|\) counters, and therefore the complete
configuration has a polynomial-size representation.
\end{proof}

\begin{lemma}[Finite configuration count]
\label{lem:configuration-count}
The number of possible configurations is finite and bounded by
\[
K=
\prod_{t\in\PodTypes}
\prod_{n\in\Nodes}
(B_{t,n}+1).
\]
Although \(K\) may be exponential, its binary length is polynomial in the
input length.
\end{lemma}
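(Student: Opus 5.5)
The plan is to show that, in a capacity-bounded instance, every reachable configuration is determined by a vector of bounded counters, and then to count those vectors. By \cref{def:capacityBounded}, for each pair \((t,n)\) there is a resource \(r\) with \(\texttt{res}(t)(r)>0\). Every configuration \(\sigma\) that can actually arise satisfies \(\used(\sigma,n)\leq C(n)\). This holds for the initial configuration, which we take to be well-formed. It is preserved by \textsc{Deploy}, because that rule checks \(\used(\sigma,n)+\texttt{res}(t)\leq C(n)\). It is preserved by \textsc{Delete}, because deletion only decreases usage.

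Fix such a \(\sigma\) and a pair \((t,n)\), and let \(r\) be any resource with \(\texttt{res}(t)(r)>0\). Since all summands in \(\used(\sigma,n)\) are nonnegative, we get \(\sigma(t,n)\,\texttt{res}(t)(r)\leq C(n)(r)\). As \(\sigma(t,n)\) is an integer, this gives \(\sigma(t,n)\leq\lfloor C(n)(r)/\texttt{res}(t)(r)\rfloor\). Taking the minimum over all such \(r\) yields \(\sigma(t,n)\leq B_{t,n}\), which is the bound already recorded in \cref{lem:poly-configuration}. Hence each configuration is a function assigning to every \((t,n)\) a value in \(\{0,\dots,B_{t,n}\}\). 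The number of such functions is exactly the product \(K\), so the number of configurations is finite and at most \(K\).

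For the size claim, I take logarithms: \(\log_2 K=\sum_{t,n}\log_2(B_{t,n}+1)\). Each \(B_{t,n}\leq C(n)(r)\) for some \(r\), so \(\log_2(B_{t,n}+1)\leq\log_2(C(n)(r)+1)\), which is at most the bit length of a binary-encoded capacity. Summing over the \(|\PodTypes|\cdot|\Nodes|\) pairs gives a bound polynomial in the input length. The binary length of \(K\) is \(\lfloor\log_2 K\rfloor+1\), so it is polynomial as well. This also shows that \(K\) can be computed, or used as a counter bound, in polynomial space.

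The only delicate point is the scope of the statement. It must be read as ranging over configurations reachable from (or consistent with) the capacity constraint, since an arbitrary \(\sigma_0\) violating capacity is not bounded. I would make explicit that the initial configuration is assumed to respect capacities, which is the same assumption implicitly used in \cref{lem:poly-configuration}. Everything else is routine arithmetic.
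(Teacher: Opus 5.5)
Your proposal is correct and follows essentially the same route as the paper: you bound each counter by \(B_{t,n}\), count the product, and bound the binary length of \(K\) by \(\sum_{t,n}\log_2(B_{t,n}+1)\). The difference is that you re-derive the counter bound through an explicit reachability invariant, assuming \(\sigma_0\) respects capacities, and you bound each term directly by a capacity's bit length; the paper simply cites \cref{lem:poly-configuration} for both facts and leaves that assumption implicit.
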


\begin{proof}
Each counter \(\sigma(t,n)\) ranges over the \(B_{t,n}+1\) values
\(0,\dots,B_{t,n}\), so the product counts all configurations. Its binary
length is
\[
\left\lceil\log_2 K\right\rceil
\leq
\sum_{t\in\PodTypes}
\sum_{n\in\Nodes}
\left\lceil\log_2(B_{t,n}+1)\right\rceil,
\]
which is polynomial in the input length by \cref{lem:poly-configuration}.
\end{proof}

\begin{lemma}[Short witnessing paths]
\label{lem:short-witness}
If a target configuration is reachable, then it is reachable by a path of
length less than \(K\) that does not repeat any configuration.
\end{lemma}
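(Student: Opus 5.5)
The plan is the standard shortest-path (pigeonhole) argument on the finite configuration graph given by \cref{lem:configuration-count}.

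Suppose the target configuration \(\sigma\) is reachable from \(\sigma_0\). Then there is an execution \(\sigma_0=\sigma_0'\longrightarrow\sigma_1'\longrightarrow\cdots\longrightarrow\sigma_\ell'=\sigma\). Among all such executions, I choose one of minimal length \(\ell\).

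\textbf{Step 1: the minimal execution is repetition-free.} Suppose \(\sigma_i'=\sigma_j'\) for some \(i<j\). Excise the segment between positions \(i\) and \(j\), giving
\[
\sigma_0'\longrightarrow\cdots\longrightarrow\sigma_i'=\sigma_j'\longrightarrow\sigma_{j+1}'\longrightarrow\cdots\longrightarrow\sigma_\ell'.
\]
This is still a legal execution, because each step's legality depends only on its source configuration: the premise of \textsc{Deploy} is \(\Hard(t,n,\sigma)\), and the premise of \textsc{Delete} is \(\sigma(t,n)>0\). The transition system carries no history, so each remaining step keeps the same source configuration and stays enabled. The new execution is strictly shorter, contradicting minimality, so no configuration repeats.

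\textbf{Step 2: count the configurations.} The execution visits \(\ell+1\) pairwise distinct configurations. Each of them satisfies the capacity bound: \(\sigma_0\) is assumed to respect capacities (an implicit well-formedness assumption I will state), every \textsc{Deploy} preserves \(\used\leq C\), and every \textsc{Delete} decreases usage. Hence \(\sigma'(t,n)\leq B_{t,n}\) for all visited configurations, so each lies in the set counted by \cref{lem:configuration-count}. Therefore \(\ell+1\leq K\), that is, \(\ell<K\).

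\textbf{Expected obstacle.} The only subtle point is the invariant that every reachable configuration stays within the bounds \(B_{t,n}\). This needs the initial configuration to respect capacities, together with the fact that the capacity-boundedness of \cref{def:capacityBounded} is what turns the check \(\used(\sigma,n)+\texttt{res}(t)\leq C(n)\) into per-coordinate bounds. I would make that assumption explicit and prove the invariant by induction on the length of the execution.
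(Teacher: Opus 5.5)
Your proposal is correct and follows essentially the same argument as the paper: take a shortest witnessing execution, excise any repeated-configuration cycle to contradict minimality, and bound the length of the resulting simple path using the count \(K\) from \cref{lem:configuration-count}. Your explicit induction showing that every reachable configuration stays within the bounds \(B_{t,n}\) (given that \(\sigma_0\) respects capacities) spells out a step that the paper leaves implicit when it invokes that count.
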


\begin{proof}
Take a shortest path from \(\sigma_0\) to a target configuration. If some
configuration occurred twice, the cycle between its two occurrences could be
removed, yielding a strictly shorter path; this contradicts minimality.
Hence the path is simple. Since there are at most \(K\) distinct
configurations by \cref{lem:configuration-count}, a simple path has length
strictly less than \(K\).
\end{proof}

\begin{theorem}[Polynomial-space pod-deployability]
\label{thm:deployability-pspace-proof}
\textnormal{\textsc{Pod-Deployability}} for capacity-bounded instances is in
\(\PSPACE\).
\end{theorem}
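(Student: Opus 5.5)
The plan is to give a nondeterministic polynomial-space algorithm for \textsc{Pod-Deployability} and then apply Savitch's theorem. The earlier lemmas supply the ingredients, so the argument is mostly assembly.

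The algorithm works as follows.
\begin{enumerate}
\item By \cref{lem:poly-configuration}, it stores the initial configuration \(\sigma_0\) as a current configuration \(\sigma\) in polynomial space.
\item It also stores a binary step counter \(c\), initialised to \(0\). By \cref{lem:configuration-count}, the bound \(K\) has polynomial binary length, so \(c\) fits in polynomial space as well. The value \(K\) itself need not be stored explicitly: the algorithm can check \(c<K\) by comparing \(\lceil\log_2 c\rceil\) with the polynomial sum of counter widths. Alternatively, it can simply count up to \(2^{W}\), where \(W=\sum_{t,n}\lceil\log_2(B_{t,n}+1)\rceil\) is the total width of the configuration encoding; \(2^W\ge K\) is a valid, polynomially encoded bound.
\item At each step, if \(\sigma(t_\star,n_\star)\ge1\), it accepts.
\item Otherwise, it guesses either a pair \((t,n)\) together with a \textsc{Deploy} move, or a pair with \(\sigma(t,n)>0\) together with a \textsc{Delete} move.
\item For \textsc{Deploy}, it checks \(\Hard(t,n,\sigma)\) in polynomial time, as discussed at the end of \cref{subsec:schedulability}. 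The check evaluates \(\used(\sigma,n)\) by summing binary products, compares capacities, and evaluates \(\mathsf{present}\) and \(\mathsf{count}\) sums over polynomially many counters. All of this is polynomial-time, hence polynomial-space, computable from the stored counters.
\item It updates \(\sigma\) by incrementing or decrementing one binary counter and increments \(c\). When \(c\) reaches the bound, it rejects.
\end{enumerate}
One point needs care. A \textsc{Deploy} step passing \(\Hard\) guarantees \(\used(\sigma,n)+\texttt{res}(t)\le C(n)\), so every reachable configuration satisfies \(\sigma(t,n)\le B_{t,n}\). The fixed-width counters therefore never overflow, and the space bound from \cref{lem:poly-configuration} stays valid along every run. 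This uses the capacity-bounded hypothesis of \cref{def:capacityBounded}: every type has some resource with a positive request.

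Correctness follows in two directions. Soundness holds because every accepting run is a legal path of \(\longrightarrow\) ending in a configuration that covers \((t_\star,n_\star)\). Completeness comes from \cref{lem:short-witness}: if such a configuration is reachable, it is reachable via a simple path of length less than \(K\le 2^W\). The machine can guess that path before the counter expires.

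Hence the problem lies in \(\mathsf{NPSPACE}\), and Savitch's theorem \cite{savitch1970relationships} gives \(\mathsf{NPSPACE}=\PSPACE\). The main point needing care is the one noted in the algorithm: keeping all intermediate arithmetic on binary-encoded, possibly numerically exponential quantities within polynomial space. That reduces to the standard fact that addition, multiplication and comparison of polynomial-length binary numbers are polynomial-time.
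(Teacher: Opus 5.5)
Your proposal is correct and follows essentially the same route as the paper. Both store the current configuration together with a binary step counter (the paper bounds it by $K$; your $2^W\ge K$ works equally well), obtain completeness from the short-witness lemma, and conclude via Savitch's theorem. Your explicit no-overflow argument via the \textsc{Deploy} guard makes precise what the paper's configuration-size lemma takes for granted, though both arguments tacitly assume that $\sigma_0$ itself respects the capacities; if it does not, replacing $B_{t,n}$ by $\max(\sigma_0(t,n),B_{t,n})$ in the counter widths and in $K$ keeps everything polynomially encoded.
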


\begin{proof}
Consider the following nondeterministic procedure. It stores the current
configuration, initially \(\sigma_0\), together with a binary step counter
bounded by \(K\). At every iteration, it first checks whether
\[
\sigma(t_\star,n_\star)\geq1.
\]
If so, it accepts. Otherwise, it nondeterministically chooses one of the
following transitions:
\begin{itemize}
\item choose \(t\in\PodTypes\) and \(n\in\Nodes\), verify
\(\Hard(t,n,\sigma)\), and replace \(\sigma\) by \(\sigma+(t,n)\); or
\item choose \(t\in\PodTypes\) and \(n\in\Nodes\) with
\(\sigma(t,n)>0\), and replace \(\sigma\) by \(\sigma-(t,n)\).
\end{itemize}
If no accepting configuration has been found after \(K\) transitions, the
branch rejects.

Each transition changes one binary counter by one. For the constraints of
\cref{sec:model}, resource sums, selector matching, affinity and
anti-affinity tests, and topology-spread counts can all be computed by
scanning the polynomially many counters, nodes, labels, and encoded
constraints. Therefore the predicate \(\Hard(t,n,\sigma)\) can be checked
in polynomial space. The procedure stores only the current configuration,
the step counter, and polynomially many auxiliary values, so it uses
polynomial space.

By \cref{lem:short-witness}, a target configuration is reachable exactly
when it is reachable within \(K\) transitions, so the procedure accepts
exactly the positive instances. Thus \textsc{Pod-Deployability} belongs to
\(\mathsf{NPSPACE}=\PSPACE\) by Savitch's theorem
\cite{savitch1970relationships}.
\end{proof}

\section{The Encoding of 1-Safe Petri Nets}
\label{app:encodings}
\label{app:petri}
\label{app:petri-encoding}

In this appendix, we formalize the polynomial-time encoding function
\(\mathcal E_{\mathrm{PN}}\) used in \cref{sec:logical-pspace} and prove its
correctness: the goal place of a 1-safe Petri net is coverable if and only if
the goal pod is deployable in the encoded instance.

We use the following notation throughout this and the next appendix. For a
label predicate \(L\), write \(s[L]\) for the selector that matches exactly
the pod types carrying label \(L\). All affinity and anti-affinity terms use
the hostname topology of the unique node; we therefore write only their
selectors. Labels described as fresh are distinct from all other labels
introduced by the construction.

Let
$
  \mathcal P=(P,T,\mathsf{Pre},\mathsf{Post},M_0)
$
be a 1-safe Petri net and let \(g\in P\) be its goal place. We identify a
reachable marking with the set \(M\subseteq P\) of its marked places. A
transition \(a\in T\) is enabled at \(M\) when
\(\mathsf{Pre}(a)\subseteq M\), and its firing produces
$
  a(M)=
  (M\setminus\mathsf{Pre}(a))\cup\mathsf{Post}(a).
$
As recalled in \cref{sec:logical-pspace}, coverability of \(g\) is
\(\PSPACE\)-hard for 1-safe nets.

\paragraph*{The encoding function.}
We define
$
  \mathcal E_{\mathrm{PN}}(\mathcal P,g)
  =
  (\PodTypes_{\mathcal P},\{n\},\sigma_0,(G_g,n)).
$
The result is a pod-deployability instance with one node \(n\), zero resource
requests, no capacity restriction, and only required affinity and required
anti-affinity. The distinguished goal type \(G_g\) requires a token for
place \(g\) and is the pod-deployability target. For every transition
\(a\in T\), partition its places into consumed, preserved, and produced:
\[
  \mathsf{Consumed}(a)=\mathsf{Pre}(a)\setminus\mathsf{Post}(a),\ \
  \mathsf{Preserved}(a)=\mathsf{Pre}(a)\cap\mathsf{Post}(a),\ \
  \mathsf{Produced}(a)=\mathsf{Post}(a)\setminus\mathsf{Pre}(a).
\]
The set of pod types is
\[
\PodTypes_{\mathcal P}={}
 \{I_p\mid p\in M_0\} 
 \ \cup\ \{P_{p,a}\mid a\in T,\ p\in\mathsf{Produced}(a)\}
 \ \cup\ \{S_a,E_a,F_a,R_a\mid a\in T\}
 \ \cup\ \{R_0,G_g\}.
\]
Here \(I_p\) is the initial token type for \(p\), \(P_{p,a}\) is a token for
\(p\) produced by transition \(a\), and \(S_a,E_a,F_a,R_a\) are the
start, execute, end, and ready types of the protocol for \(a\). The type
\(R_0\) supplies the initial ready pod.

Let \(\bot\) be a fresh label that no pod type carries. The labels and hard
constraints generated by \(\mathcal E_{\mathrm{PN}}\) are as follows; every
unlisted affinity or anti-affinity set is empty.
\begin{description}
\item[Initial token \(I_p\).]
It carries \(\mathsf{place}=p\), requires affinity to \(s[\bot]\), and has
anti-affinity to \(s[\mathsf{place}=p]\). Thus its initial instance can be
deleted but the type can never be deployed again.

\item[Initial ready type \(R_0\).]
It carries \(\mathsf{ready}\), requires affinity to \(s[\bot]\), and has
anti-affinity to \(s[\mathsf{ready}]\). It likewise cannot be recreated.

\item[Start type \(S_a\).]
It carries \(\mathsf{start}\) and \(\mathsf{start}=a\). Its affinity terms
are
\[
 \{s[\mathsf{ready}]\}
 \cup
 \{s[\mathsf{place}=p]\mid p\in\mathsf{Pre}(a)\},
\]
and its anti-affinity terms are
\(s[\mathsf{start}]\) and \(s[\mathsf{end}]\).

\item[Execute type \(E_a\).]
It carries \(\mathsf{execute}\) and \(\mathsf{execute}=a\), requires affinity
to \(s[\mathsf{start}=a]\), and has anti-affinity to
\[
 \{s[\mathsf{ready}],s[\mathsf{execute}]\}
 \cup
 \{s[\mathsf{place}=p]\mid p\in\mathsf{Consumed}(a)\}.
\]
The absence test is imposed only on the consumed places; preserved places
(self-loops) are left untouched.

\item[Produced-token type \(P_{p,a}\).]
It carries \(\mathsf{place}=p\), requires affinity to
\(s[\mathsf{execute}=a]\), and has anti-affinity to
\(s[\mathsf{place}=p]\).

\item[End type \(F_a\).]
It carries \(\mathsf{end}\) and \(\mathsf{end}=a\). It requires
affinity to
\[
 \{s[\mathsf{execute}=a]\}
 \cup
 \{s[\mathsf{place}=p]\mid p\in\mathsf{Post}(a)\},
\]
and has anti-affinity to \(s[\mathsf{end}]\).

\item[Ready type \(R_a\).]
It carries \(\mathsf{ready}\), requires affinity to
\(s[\mathsf{end}=a]\), and has anti-affinity to
\(s[\mathsf{ready}]\), \(s[\mathsf{start}]\), and
\(s[\mathsf{execute}]\).

\item[Goal type \(G_g\).]
It carries the fresh label \(\mathsf{goal}\) and has the single affinity term
\(s[\mathsf{place}=g]\). Thus it is deployable exactly when a token for
\(g\) is present.
\end{description}
All types have zero resource requests and pass every static filter. The
initial configuration is
\[
 \sigma_0(I_p,n)=1\quad(p\in M_0),
 \qquad \sigma_0(R_0,n)=1,
\]
and is zero elsewhere. Required resident anti-affinity ensures that at most
one token carrying \(\mathsf{place}=p\), one ready pod, and one pod carrying
each global control label can be present.

For a Kubernetes configuration \(\sigma\), define its token projection by
\[
 \pi_{\mathrm{PN}}(\sigma)
 =\{p\in P\mid
     \text{some running pod in \(\sigma\) carries \(\mathsf{place}=p\)}\}.
\]
A configuration is \emph{stable for \(M\)} if its token projection is \(M\),
it contains exactly one ready pod, and it contains no start, execute, or end
pod. Let \(\Phi_{\mathrm{PN}}(M)\) denote any such stable configuration. The
choice of producer type for a token is irrelevant because all token types for
the same place carry the same place label and the same uniqueness constraint.

\begin{theorem}[Polynomiality of the Petri-net encoding]
\label{thm:petri-encoding-poly}
The encoding function \(\mathcal E_{\mathrm{PN}}\) is computable in time
polynomial in the size of \(\mathcal P\).
\end{theorem}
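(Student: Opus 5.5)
We prove polynomiality by explicitly bounding the size of every component of \(\mathcal E_{\mathrm{PN}}(\mathcal P,g)\) and observing that each component is produced by a direct scan of the net. Let \(|\mathcal P|\) denote the size of the input, which is at least \(|P|+|T|+\sum_{a\in T}(|\mathsf{Pre}(a)|+|\mathsf{Post}(a)|)\).

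\textbf{Computing the sets.} For each \(a\in T\) we compute \(\mathsf{Consumed}(a)\), \(\mathsf{Preserved}(a)\) and \(\mathsf{Produced}(a)\) by set difference and intersection of \(\mathsf{Pre}(a)\) and \(\mathsf{Post}(a)\). This takes time at most quadratic in \(|\mathsf{Pre}(a)|+|\mathsf{Post}(a)|\), hence \(O(|\mathcal P|^2)\) overall.

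\textbf{Counting pod types.} The set \(\PodTypes_{\mathcal P}\) has size
\[
|M_0|+\sum_{a\in T}|\mathsf{Produced}(a)|+4|T|+2 \;=\; O(|\mathcal P|).
\]

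\textbf{Size of each type.} Each type carries a constant number of labels, drawn from \(\{\mathsf{place}=p,\mathsf{ready},\mathsf{start},\mathsf{start}=a,\dots,\mathsf{goal}\}\). Each label value is an index from \(P\cup T\), so it has \(O(\log|\mathcal P|)\) bits. The affinity and anti-affinity terms are bounded per type as follows:
\begin{itemize}
\item \(S_a\) has \(1+|\mathsf{Pre}(a)|\) affinity terms and two anti-affinity terms.
\item \(E_a\) has one affinity term and \(2+|\mathsf{Consumed}(a)|\) anti-affinity terms.
\item \(F_a\) has \(1+|\mathsf{Post}(a)|\) affinity terms and one anti-affinity term.
\item Every other type has at most four terms.
\end{itemize}
Every selector \(s[L]\) matches a single label predicate and has size \(O(\log|\mathcal P|)\). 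Summing over all types, the total constraint size is \(O(|\mathcal P|\log|\mathcal P|)\).

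\textbf{Remaining components.} The node \(n\) has a single hostname label. There are no resources, because all requests are zero and \(\Resources\) may be taken empty or a single dimension with zero requests. All static filters are trivially true. The initial configuration \(\sigma_0\) lists \(|M_0|+1\) unit entries. The target \((G_g,n)\) has constant size.

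\textbf{Assembling the bound.} Generating the output is a single loop over \(M_0\), over \(T\), and over \(\mathsf{Produced}(a)\) for each transition, emitting the fixed templates given in the definition of \(\mathcal E_{\mathrm{PN}}\). Each emission is polynomial in its output size. Summing the bounds above, the encoding is computable in time \(O(|\mathcal P|^2\log|\mathcal P|)\), which establishes \cref{thm:petri-encoding-poly}.

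The only step that needs care is ensuring that selectors are represented compactly: each selector must be a single label predicate rather than an explicit enumeration of the matching pod types. This does hold, since every selector \(s[L]\) in the construction matches exactly one label predicate. Even under an explicit enumeration the size would only grow by a factor of \(|\PodTypes_{\mathcal P}|\), so the bound would still be polynomial.
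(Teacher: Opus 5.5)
Your proposal is correct and follows the same route as the paper's proof. Both count the pod types (one per initially marked place, one per transition--produced-place pair, four control types per transition, plus $R_0$ and $G_g$), observe that each type has a constant number of labels and guards apart from one term per preset, postset or consumed place, and note that labels and selectors are indices over $P\cup T$; your version just adds explicit per-type term counts and a concrete $O(|\mathcal P|^2\log|\mathcal P|)$ time bound, where the paper simply asserts polynomiality.
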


\begin{proof}
The encoding creates one node, one initial token type per initially marked
place, one produced-token type per transition-produced-place pair, four
control types per transition, one initial ready type, and one goal type. It
contains one affinity term per preset or postset test and only a constant
number of additional guards per type. Labels and selectors use indices from
\(P\cup T\), so the total size and the computation time are polynomial in
\(|P|+|T|+|M_0|\).
\end{proof}

\begin{theorem}[PSPACE membership of the capacity-free fragment]
\label{thm:logical-pspace-membership}
Pod-deployability is in \(\PSPACE\) when the only active scheduling
constraints are required inter-pod affinity and required inter-pod
anti-affinity, even without any capacity bound.
\end{theorem}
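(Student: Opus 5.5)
The plan is to abstract configurations by their support and run the nondeterministic search of \cref{thm:deployability-pspace-proof} on this finite abstraction. For a configuration \(\sigma\), let \(\mathsf{supp}(\sigma)=\{(t,n)\mid\sigma(t,n)>0\}\). In this fragment all requests are zero, all static filters hold, and there are no spread constraints. Hence \(\Hard(t,n,\sigma)\) reduces to \(\mathsf{AffinityOK}_\sigma(t,n)\). This predicate depends on \(\sigma\) only through the predicates \(\mathsf{present}_\sigma(s,\tau,n)\) and the cluster-wide existence test in the self-affinity exception. Both are determined by \(\mathsf{supp}(\sigma)\) alone. First, I will record the resulting lemma: if \(\mathsf{supp}(\sigma)=\mathsf{supp}(\sigma')\), then \(\Hard(t,n,\sigma)\iff\Hard(t,n,\sigma')\) for all \(t,n\).

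Next, I will define an abstract transition system on subsets \(X\subseteq\PodTypes\times\Nodes\). There is a deploy step \(X\to X\cup\{(t,n)\}\) whenever \(\Hard(t,n,\sigma)\) holds for some, equivalently every, \(\sigma\) with support \(X\). There is a delete step \(X\to X\setminus\{(t,n)\}\) for each \((t,n)\in X\). I will show this system coincides with the concrete one for coverability, in both directions.

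\emph{Soundness.} Each concrete step \(\sigma\to\sigma'\) induces either an abstract step or a stutter \(\mathsf{supp}(\sigma')=\mathsf{supp}(\sigma)\). A stutter occurs when a type already present is deployed again, or when one of several instances is deleted. Projecting a concrete run therefore gives an abstract run ending in a support that contains \((t_\star,n_\star)\).

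\emph{Completeness.} Given an abstract run from \(\mathsf{supp}(\sigma_0)\), I realise it concretely. A deploy step is realised by one \textsc{Deploy}, which is legal by the lemma. A delete of \((t,n)\) is realised by deleting all \(\sigma(t,n)\) instances one at a time. This is always allowed, since deletions carry no guard. Each intermediate configuration keeps the same support until the last deletion, so no guard is involved anyway.

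Finally, the decision procedure stores the current support as a bit vector of length \(|\PodTypes|\cdot|\Nodes|\) together with a binary step counter up to \(2^{|\PodTypes|\cdot|\Nodes|}\). At each step it guesses an abstract transition and checks its guard in polynomial time by scanning the selectors against the support. It accepts when \((t_\star,n_\star)\) enters the support. By the simple-path argument of \cref{lem:short-witness}, applied to the abstract graph, this bound on the number of steps suffices. The procedure thus runs in \(\mathsf{NPSPACE}\), which equals \(\PSPACE\) by Savitch's theorem.

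The only step needing care is the lemma that the guards depend only on the support. In particular, I must verify that the self-affinity bootstrap condition \(\exists u,m:\,u\in\sem{s}\land\sigma(u,m)>0\) is a support property. It is, since it tests only positivity. The initial configuration \(\sigma_0\) may contain large multiplicities given in binary, but only its support is ever read, so this causes no difficulty.
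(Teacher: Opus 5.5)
Your proposal is correct and follows the paper's own route. You abstract configurations by their support, note that the affinity and anti-affinity guards (including the self-affinity bootstrap test) only read positivity, and run a nondeterministic search over at most \(2^{|\PodTypes|\cdot|\Nodes|}\) supports with a polynomial-size step counter, concluding via Savitch's theorem; you additionally make explicit the soundness and completeness of the abstraction (stuttering steps, and realising an abstract deletion by removing every instance), which the paper compresses into the remark that multiplicities saturate at one.
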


\begin{proof}
Affinity and anti-affinity inspect only whether a matching pod is present, not
its exact multiplicity. Abstract a configuration by its support
\(\mathsf{supp}(\sigma)=\{(t,n)\mid\sigma(t,n)>0\}\). Adding another instance
of an already present type satisfies no new affinity term, and deleting one
instance while another remains changes no guard; hence multiplicities saturate
at one. There are at most \(2^{|\PodTypes|\cdot|\Nodes|}\) supports. A
nondeterministic algorithm \emph{a}) stores the current support and a step
counter bounded by this number, \emph{b}) guesses one deployment or deletion,
and \emph{c}) checks its guards in polynomial time. The search uses polynomial
space, so the problem is in \(\mathsf{NPSPACE}=\PSPACE\) by Savitch's theorem
\cite{savitch1970relationships}.
\end{proof}

\begin{lemma}[One firing is simulated]
\label{lem:petri-completeness-step}
If \(M\xrightarrow{a}M'\) in \(\mathcal P\), then every stable configuration
for \(M\) reaches a stable configuration for \(M'\).
\end{lemma}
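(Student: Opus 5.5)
The proof is a direct simulation. Starting from an arbitrary stable configuration \(\sigma\) for \(M\), we exhibit an explicit sequence of \textsc{Deploy} and \textsc{Delete} steps that runs the start-execute-end-ready protocol for \(a\). At each deployment we check the guard \(\mathsf{AffinityOK}\) of \cref{def:hard}; the other three conjuncts of \(\Hard\) hold trivially in this fragment, since requests are zero, static filters are true and there are no spread constraints. Since \(\sigma\) is stable, its pods are: token pods whose place labels are exactly \(M\), exactly one ready pod, no pod carrying \(\mathsf{start}\), \(\mathsf{execute}\) or \(\mathsf{end}\), and possibly some goal pods, which are irrelevant to every guard. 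Deletions are always legal, so only the deployments need attention.

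The sequence and the check for each step are as follows.
\begin{enumerate}
\item Deploy \(S_a\). Its affinity holds because the ready pod is present and \(\mathsf{Pre}(a)\subseteq M\). Its anti-affinity to \(\mathsf{start}\) and \(\mathsf{end}\) holds by stability.
\item Delete the ready pod and every pod carrying \(\mathsf{place}=p\) for \(p\in\mathsf{Consumed}(a)\).
\item Deploy \(E_a\). Its affinity holds because \(S_a\) carries \(\mathsf{start}=a\). Its anti-affinity holds because no ready pod, no execute pod and no consumed token remains.
\item Delete \(S_a\).
\item For each \(p\in\mathsf{Produced}(a)\), deploy \(P_{p,a}\). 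Its affinity holds because \(E_a\) carries \(\mathsf{execute}=a\). Its anti-affinity requires \(p\notin M\).
\item Deploy \(F_a\). Its affinity holds because \(E_a\) is present, every \(p\in\mathsf{Preserved}(a)\) was never touched, and every \(p\in\mathsf{Produced}(a)\) was just created. Its anti-affinity holds because no end pod is present.
\item Delete \(E_a\).
\item Deploy \(R_a\). Its affinity holds because \(F_a\) carries \(\mathsf{end}=a\). Its anti-affinity holds because ready, start and execute pods are all absent.
\item Delete \(F_a\).
\end{enumerate}
In none of these deployments does the self-affinity exception play a role, because no deployed type matches one of its own affinity selectors.

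The main obstacle is step 5, where we must show \(p\notin M\) for each \(p\in\mathsf{Produced}(a)\). This is where 1-safety is used. Since \(p\notin\mathsf{Pre}(a)\), firing \(a\) from \(M\) leaves any token already on \(p\) in place and adds one more. If \(p\) were in \(M\), then \(M'\) would carry two tokens on \(p\), contradicting that every reachable marking is 1-safe. So \(p\notin M\), and \(p\) is also untouched by step 2 because \(p\notin\mathsf{Consumed}(a)\). Hence no pod carries \(\mathsf{place}=p\) when \(P_{p,a}\) is deployed.

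It remains to check that the final configuration is stable for \(M'\). It contains exactly one ready pod, namely \(R_a\), and no start, execute or end pods. Its token projection is
\[
(M\setminus\mathsf{Consumed}(a))\cup\mathsf{Produced}(a)=(M\setminus\mathsf{Pre}(a))\cup\mathsf{Post}(a)=M'.
\]
The first equality holds because the tokens in \(\mathsf{Preserved}(a)\subseteq M\) were never removed.
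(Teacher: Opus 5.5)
Your proof is correct and follows the paper's argument essentially step for step, including the use of 1-safety (for a reachable \(M\), which the paper also assumes implicitly) to show \(p\notin M\) for every \(p\in\mathsf{Produced}(a)\). The only difference is that you delete \(S_a\) before deploying the produced tokens, whereas the paper deletes it afterwards; this is harmless because \(P_{p,a}\) needs only \(E_a\) as its affinity witness.
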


\begin{proof}
Because \(a\) is enabled,
\(\mathsf{Pre}(a)\subseteq M\). Hence the ready pod and the token pods for
all places in \(\mathsf{Pre}(a)\) witness the affinity terms of \(S_a\).
No start or end pod is present in a stable configuration, so its
anti-affinity terms also hold and \(S_a\) can be deployed.

Delete the ready pod and precisely the token pods for
\(\mathsf{Consumed}(a)\); the tokens for \(\mathsf{Preserved}(a)\) remain.
The start pod remains as a lock: its resident anti-affinity prevents every
other start and every end. Now \(E_a\) is deployable. Its positive witness
is \(S_a\), and its negative tests certify that ready and every consumed
input token have disappeared. Once \(E_a\) is present, deploy one
\(P_{p,a}\) for every \(p\in\mathsf{Produced}(a)\). Its execute affinity
holds. Its place anti-affinity also holds: if
\(p\in\mathsf{Produced}(a)=\mathsf{Post}(a)\setminus\mathsf{Pre}(a)\),
1-safety implies that \(p\notin M\) whenever \(a\) is enabled at a reachable
marking.

After all outputs exist, delete \(S_a\) and deploy \(F_a\). The end
affinities certify that \(E_a\) and every postset token are present: a
preserved place is witnessed by the token kept throughout and a produced
place by the newly deployed \(P_{p,a}\). Delete \(E_a\), then deploy
\(R_a\), witnessed by \(F_a\). Finally delete \(F_a\). The remaining
configuration contains one ready pod, no control pod, and exactly the tokens
\((M\setminus\mathsf{Pre}(a))\cup\mathsf{Post}(a)=M'\). It is therefore
stable for \(M'\). The sequence is illustrated in
\cref{fig:petri-encoding}.
\end{proof}

\begin{figure}[t]
\centering
\begin{tikzpicture}[
  >=Latex,
  place/.style={circle,draw,minimum size=5.5mm,inner sep=0pt},
  trans/.style={rectangle,draw,minimum width=3mm,minimum height=8mm},
  state/.style={rounded corners,draw,align=center,minimum width=39mm,
                minimum height=8mm,font=\small},
  lab/.style={font=\scriptsize,align=center}
]

\node[place] (p) at (0,-0.925) {$p$};
\node[place] (q) at (0,-1.925) {$q$};
\node[trans] (a) at (1.25,-1.425) {};
\node[place] (r) at (2.5,-1.425) {$r$};
\draw[->] (p) -- (a);
\draw[->] (q) -- (a);
\draw[->] (a) -- (r);
\node[lab] at (1.25,-2.225) {transition $a$};
\node[lab] at (1.25,-0.425) {Petri net};

\node[state] (s0) at (7.1,1.75)
  {$R,\ P_p,\ P_q$\\stable marking $\{p,q\}$};
\node[state,below=3.5mm of s0] (s1)
  {$S_a,\ P_p,\ P_q$\\start: test the inputs};
\node[state,below=3.5mm of s1] (s2)
  {$S_a,\ E_a$\\execute: consumed inputs gone};
\node[state,below=3.5mm of s2] (s3)
  {$E_a,\ P_r$\\produce every output};
\node[state,below=3.5mm of s3] (s4)
  {$F_a,\ P_r$\\end, then recreate $R$};
\node[state,below=3.5mm of s4] (s5)
  {$R,\ P_r$\\stable marking $\{r\}$};
\draw[->] (s0) -- (s1);
\draw[->] (s1) -- node[right,lab] {delete $R,P_p,P_q$} (s2);
\draw[->] (s2) -- (s3);
\draw[->] (s3) -- node[right,lab] {delete $S_a$} (s4);
\draw[->] (s4) -- node[right,lab] {delete $E_a,F_a$} (s5);
\node[lab] at (7.1,2.35) {Kubernetes protocol};

\draw[decorate,decoration={brace,amplitude=8pt,mirror}]
  (4.5,2.5) -- (4.5,-5.40);

\draw[->,dashed] (3,-1.45) -- (4,-1.45);
\end{tikzpicture}
\caption{Encoding a Petri-net transition with
\(\mathsf{Pre}(a)=\{p,q\}\) and \(\mathsf{Post}(a)=\{r\}\).
The displayed configurations suppress producer indices and transient deletion
steps. The ready pod \(R\) separates completed protocol cycles.}
\label{fig:petri-encoding}
\end{figure}

\begin{corollary}[Completeness of \(\mathcal E_{\mathrm{PN}}\)]
\label{cor:petri-completeness}
If \(M_0\xrightarrow{*}M\), then
\(\sigma_0\longrightarrow^*\Phi_{\mathrm{PN}}(M)\). In particular, if
\(g\in M\), the encoded pod-deployability instance can deploy \(G_g\).
\end{corollary}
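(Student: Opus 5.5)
The plan is an induction on the length of the firing sequence \(M_0\xrightarrow{a_1}M_1\xrightarrow{a_2}\cdots\xrightarrow{a_k}M_k=M\) in \(\mathcal P\), using \cref{lem:petri-completeness-step} as the inductive step. The induction invariant is that \(\sigma_0\longrightarrow^*\sigma_i\) for some configuration \(\sigma_i\) that is stable for \(M_i\). The statement of \cref{lem:petri-completeness-step} quantifies over \emph{every} stable configuration for the source marking. The choice of representative \(\Phi_{\mathrm{PN}}(M_i)\) is therefore immaterial: whichever stable configuration we reach, the lemma applies to it. Transitivity of \(\longrightarrow^*\) then closes the step.

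For the base case \(k=0\), I check that \(\sigma_0\) is itself stable for \(M_0\).
\begin{itemize}
\item Its token projection is \(\{p\mid p\in M_0\}\), because the only token pods are the \(I_p\) with \(p\in M_0\), and \(I_p\) carries \(\mathsf{place}=p\).
\item It contains exactly one ready pod, namely \(R_0\).
\item It contains no start, execute, or end pod.
\end{itemize}
So \(\sigma_0\longrightarrow^0\sigma_0=\Phi_{\mathrm{PN}}(M_0)\). For the inductive step, given \(\sigma_0\longrightarrow^*\sigma_i\) with \(\sigma_i\) stable for \(M_i\), and given \(M_i\xrightarrow{a_{i+1}}M_{i+1}\) with \(M_i\) reachable (as the lemma implicitly needs for its 1-safety argument), \cref{lem:petri-completeness-step} yields a stable \(\sigma_{i+1}\) for \(M_{i+1}\) with \(\sigma_i\longrightarrow^*\sigma_{i+1}\).

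For the ``in particular'' clause, suppose \(g\in M\). Then the reached stable configuration \(\sigma\) contains some pod carrying \(\mathsf{place}=g\), by definition of the token projection. The goal type \(G_g\) has the single affinity term \(s[\mathsf{place}=g]\) with hostname topology on the unique node \(n\), and this term is witnessed by that pod. \(G_g\) has no anti-affinity, zero requests, and passes all static filters. The self-affinity exception is irrelevant, since \(G_g\notin\sem{s[\mathsf{place}=g]}\). Hence \(\Hard(G_g,n,\sigma)\) holds, and one \textsc{Deploy} step reaches a configuration with \(\sigma(G_g,n)\geq1\).

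The only delicate point is the hypothesis of \cref{lem:petri-completeness-step}. In its proof, the anti-affinity of produced tokens relies on \(M\) being a \emph{reachable} marking of the 1-safe net. I therefore carry reachability of \(M_i\) along in the induction, which is immediate because \(M_i\) is a prefix marking of the firing sequence from \(M_0\). Beyond this, the argument is bookkeeping.
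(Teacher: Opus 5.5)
Your proof is correct and follows the paper's argument: the paper also observes that \(\sigma_0\) is stable for \(M_0\) and applies \cref{lem:petri-completeness-step} inductively along the firing sequence. Your explicit tracking of the reachability of each \(M_i\) (needed for the 1-safety step inside the lemma) and your check that \(G_g\) can then be deployed add details the paper leaves implicit.
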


\begin{proof}
The initial configuration is stable for \(M_0\). Apply
\cref{lem:petri-completeness-step} inductively to the firing sequence.
\end{proof}

Kubernetes also permits deletions not prescribed by the simulation. We use
the following standard monotonicity argument to show that such losses cannot
create a false positive.

\begin{lemma}[Lifting token losses]
\label{lem:petri-loss-lifting}
Suppose a sequence starting at \(M_0\) alternates Petri-net firings with
arbitrary losses \(L\mapsto L'\subseteq L\), and ends in \(L_f\). Then
\(\mathcal P\) has a loss-free firing sequence from \(M_0\) to a reachable
marking \(M_f\supseteq L_f\).
\end{lemma}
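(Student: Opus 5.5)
We argue by induction on the number of firings in the mixed sequence, proving a stronger invariant. Write the sequence as $L_0=M_0, L_1,\dots,L_m=L_f$, where each step is either a firing $L_i\xrightarrow{a}L_{i+1}$ or a loss $L_{i+1}\subseteq L_i$. We show that for every $i$ there is a loss-free firing sequence $M_0\xrightarrow{*}M_i$ in $\mathcal P$ with $M_i\supseteq L_i$. The base case takes $M_0$ itself, and the lemma is the instance $i=m$.

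\textbf{Loss step.} If $L_{i+1}\subseteq L_i$, keep $M_{i+1}=M_i$. Then $M_{i+1}\supseteq L_i\supseteq L_{i+1}$, and no new firing is needed.

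\textbf{Firing step.} Suppose $L_i\xrightarrow{a}L_{i+1}=(L_i\setminus\mathsf{Pre}(a))\cup\mathsf{Post}(a)$ with $\mathsf{Pre}(a)\subseteq L_i$. Since $L_i\subseteq M_i$, transition $a$ is enabled at $M_i$. Set $M_{i+1}=(M_i\setminus\mathsf{Pre}(a))\cup\mathsf{Post}(a)$, which is reachable because $M_i$ is reachable. Monotonicity holds because set difference and union preserve inclusion: $L_i\setminus\mathsf{Pre}(a)\subseteq M_i\setminus\mathsf{Pre}(a)$, and adding $\mathsf{Post}(a)$ to both sides gives $L_{i+1}\subseteq M_{i+1}$.

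\textbf{Main obstacle.} The delicate point is that the firing rule is set-based, so one must check it is well defined on the larger marking $M_i$. In a multiset semantics, an output place already marked would receive a second token. This does not happen here: $M_i$ is a genuinely reachable marking of the 1-safe net $\mathcal P$, so by 1-safety firing $a$ at $M_i$ yields a marking with at most one token per place, equal to the set-theoretic formula above. Hence $M_{i+1}$ is again a reachable marking of $\mathcal P$ in the sense of the paper.

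Lossy markings $L_i$ are never required to be reachable themselves; they appear only as lower bounds. This is why the whole argument is carried on the reachable markings $M_i$ rather than on the $L_i$.
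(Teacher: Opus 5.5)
Your proof is correct and is essentially the paper's argument: you keep the same invariant (a reachable loss-free marking $M_i\supseteq L_i$) and leave $M_i$ unchanged on losses. On a firing you use $\mathsf{Pre}(a)\subseteq L_i\subseteq M_i$ and the monotonicity of $X\mapsto (X\setminus\mathsf{Pre}(a))\cup\mathsf{Post}(a)$, and you invoke 1-safety of the reachable $M_i$ to justify the set-based firing, exactly as the paper does (your induction is really on all steps, losses included, rather than on firings only, which is immaterial).
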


\begin{proof}
Maintain by induction a reachable loss-free marking \(M\) containing the
current lossy marking \(L\). Initially take \(M=L=M_0\). A loss preserves
the invariant without changing \(M\). If \(a\) is enabled at \(L\), then
\(\mathsf{Pre}(a)\subseteq L\subseteq M\), so \(a\) is also enabled at
\(M\). Since \(M\) is reachable and the net is 1-safe, firing \(a\) is a
legal safe firing. Moreover,
\[
 (L\setminus\mathsf{Pre}(a))\cup\mathsf{Post}(a)
 \subseteq
 (M\setminus\mathsf{Pre}(a))\cup\mathsf{Post}(a).
\]
Taking the right-hand side as the new \(M\) preserves the invariant.
\end{proof}

\begin{lemma}[Protocol decomposition]
\label{lem:petri-protocol-decomposition}
Every Kubernetes execution generated by \(\mathcal E_{\mathrm{PN}}\) is a
concatenation of completed protocol intervals, followed by at most one
unfinished interval. Each completed interval projects to one enabled
Petri-net firing, possibly together with arbitrary token losses. An unfinished
interval cannot start another protocol.
\end{lemma}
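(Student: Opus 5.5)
We prove the statement by establishing a family of invariants on reachable configurations and performing a case analysis on which pod type can be deployed next. First we classify the pod types into \emph{control} types ($R_0,R_a,S_a,E_a,F_a$), \emph{token} types ($I_p,P_{p,a}$), and the goal type $G_g$. By induction on the length of an execution from $\sigma_0$ we show the following invariants. (i) At most one pod carrying $\mathsf{ready}$ is present, and similarly for $\mathsf{start}$, $\mathsf{execute}$, $\mathsf{end}$, and for each $\mathsf{place}=p$. This holds because every type carrying one of these labels is anti-affine to that same label, and $I_p,R_0$ can never be redeployed, since no pod carries $\bot$. (ii) The \emph{phase} of a configuration, determined by which control labels are present, ranges over a small set of shapes. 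Concretely: the \emph{idle} shape (ready, possibly no control pod at all after a deletion); the shapes \emph{started} ($S_a$ present, possibly ready still present); \emph{executing} ($E_a$ present, with $S_a$ possibly present); \emph{ending} ($F_a$ present, $E_a$ possibly present); and \emph{restored} ($R_a$ with $F_a$ possibly present). Each transition is inspected against the guards to check that the shape set is closed.

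Next we define the protocol intervals. An interval begins with a deployment of some $S_a$ and ends with the next deployment of a ready pod $R_a$. We show that between these two events every deployed control pod refers to the same transition $a$. $E_b$ needs $\mathsf{start}=b$, which forces $b=a$ because only one start pod can exist. $F_b$ needs $\mathsf{execute}=b$, which forces $b=a$ by uniqueness of execute. $R_b$ needs $\mathsf{end}=b$. $P_{p,b}$ needs $\mathsf{execute}=b$. Moreover, a second start cannot occur inside an interval. While $S_a$ is present, $S$ is blocked by start anti-affinity. Once $S_a$ is deleted, a new start requires a ready pod. No ready pod can appear until $R_a$ is deployed: $R_0$ is not redeployable, and any $R_b$ needs an end pod, hence an $E$, hence an $S$ of that interval. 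We also need that the old ready pod was deleted before $E_a$. This gives the claim that an unfinished interval cannot start another protocol. Before the first $S$, the only possible moves are deletions, together with deployments of $G_g$, which carries no labels and is therefore harmless.

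The projection argument is the main obstacle. Let $L$ be the token projection at the start of the interval, and let $L'$ be the projection when $R_a$ is deployed. We must show that $L' \subseteq (L\setminus\mathsf{Pre}(a))\cup\mathsf{Post}(a)$, which up to losses is a firing of $a$, and that $a$ was enabled at $L$. Enabledness holds because $S_a$'s affinity requires $\mathsf{Pre}(a)\subseteq L$ at start time; intermediate deletions are treated as losses prior to firing. The only token types deployable during the interval are the $P_{p,a}$ with $p\in\mathsf{Produced}(a)$, and these are deployable only while $E_a$ is present. A consumed place $p\in\mathsf{Consumed}(a)$ is absent when $E_a$ is deployed, by its anti-affinity. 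Since $I_p$ is never redeployable and no $P_{p,b}$ with $p\notin\mathsf{Produced}(a)$ is enabled, $p$ stays absent until $R_a$. This holds even if $E_a$ is deleted and redeployed within the interval, because the anti-affinity guard is re-checked. The delicate step is tokens that might be created before the $E_a$ deployment and survive. We argue that no token deployment is possible without $E_a$, so no consumed-place token is created before $E_a$ either. Preserved and unrelated places can only lose tokens. This yields the inclusion, and the interval projects to one firing of $a$ followed by losses, which is the form required by \cref{lem:petri-loss-lifting}.
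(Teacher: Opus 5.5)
Your approach is the same as the paper's: label uniqueness from self-anti-affinity and the non-redeployable $I_p,R_0$, then the forced cycle $S_a,E_a,P_{p,a},F_a,R_a$, then a token projection. Your projection step is more careful than the paper's sketch. You note that the only token types deployable inside an interval are the $P_{p,a}$ with $p\in\mathsf{Produced}(a)$, so consumed places stay empty once $E_a$ has certified them.

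However, the claim ``a second start cannot occur inside an interval'' is false for intervals anchored at the deployment of $S_a$. Suppose $S_a$ is deleted while the ready pod is still present, which must happen before any $E_a$. The configuration is then idle again, and any $S_b$ with $\mathsf{Pre}(b)$ marked can be deployed at once. The execution may then run the whole protocol for $b$. Your interval from $S_a$ to the next ready deployment then contains control pods for both $a$ and $b$. The inclusion $L'\subseteq(L\setminus\mathsf{Pre}(a))\cup\mathsf{Post}(a)$ can then fail, since a place in $\mathsf{Consumed}(a)\setminus\mathsf{Pre}(b)$ may simply stay marked. Your remark that the old ready pod must be gone before $E_a$ only covers intervals that actually reach $E_a$.

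The repair is easy. A start deleted while ready is still present contributes only token losses. Treat it as a pure-loss segment and restart the interval at the next start. Equivalently, split executions at idle configurations, as the Lean proof does with its ``one firing or pure token loss'' intervals. The paper's sketch glosses over the same case (``deleting a required control pod early makes completion impossible''), but your write-up should treat it explicitly.

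A smaller point: the shape list in (ii) is not closed under the paper's semantics, where only the incoming pod's anti-affinity is checked. $F_a$ is anti-affine only to $\mathsf{end}$, so it can be deployed while $S_a$ is still present. This gives configurations containing $\{S_a,E_a,F_a\}$ and then $\{S_a,F_a\}$. Nothing downstream needs the exact list. What you actually use is label uniqueness plus the invariant that ready and execute pods never coexist: an $E$ requires no ready, and an $R$ requires no execute. This invariant does hold. Together with $S_a$'s anti-affinity to $\mathsf{end}$, it ensures no execute or end pod is present when an interval begins.
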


\begin{proof}
A start pod requires ready. Once some \(S_a\) is deployed, its resident
anti-affinity to \(s[\mathsf{start}]\) excludes every other start pod.
Moreover, ready must be deleted before \(E_a\) can appear. Execute requires
the particular \(S_a\) and certifies, by anti-affinity, that all consumed
preset tokens have been deleted. Thus reaching execute identifies an enabled
transition \(a\) at the token projection observed when \(S_a\) was deployed.

Produced-token types for \(a\) require \(E_a\). An end pod requires all
postset tokens and cannot appear while another end pod remains. Finally, a
new ready pod requires \(F_a\) and cannot appear while a start or execute pod
remains. A subsequent start is blocked until the end pod is also deleted.
Consequently the only way to return to a stable control state is the ordered
cycle
\[
  \mathsf{ready}\ ;\ S_a\ ;\ E_a\ ;\
  \{P_{p,a}:p\in\mathsf{Produced}(a)\}\ ;\ F_a\ ;\ R_a.
\]
Deleting a required control pod early makes completion impossible, because
deleted control types cannot be recreated without an earlier phase. Such an
execution is therefore an unfinished final interval and cannot enable a new
start. Deletions of unrelated token pods are the only remaining deviations;
they project exactly to token losses.
\end{proof}

\begin{theorem}[Correctness of the Petri-net encoding]
\label{thm:petri-encoding-correct}
The goal place \(g\) is coverable from \(M_0\) in \(\mathcal P\) if and only
if the target \((G_g,n)\) is coverable in
\(\mathcal E_{\mathrm{PN}}(\mathcal P,g)\).
\end{theorem}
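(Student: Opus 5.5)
The proof splits into the two directions, each assembled from the lemmas stated just above.

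For the forward direction (coverability of \(g\) implies deployability of \(G_g\)), suppose \(M_0\xrightarrow{*}M\) with \(g\in M\). By \cref{cor:petri-completeness} we obtain \(\sigma_0\longrightarrow^*\Phi_{\mathrm{PN}}(M)\). This stable configuration contains a token pod carrying \(\mathsf{place}=g\). \(G_g\) has a single affinity term \(s[\mathsf{place}=g]\) and no anti-affinity. Resources, static filters and spread constraints are trivial in the encoding. Hence \(\Hard(G_g,n,\Phi_{\mathrm{PN}}(M))\) holds, and one \textsc{Deploy} step covers \((G_g,n)\).

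For the backward direction, take an execution \(\sigma_0\longrightarrow^*\sigma\longrightarrow\sigma+(G_g,n)\). Since \(G_g\) carries only the fresh label \(\mathsf{goal}\), no other guard mentions it. We may therefore assume it is deployed only at the last step, so the prefix is an execution of the control/token pods alone. By \cref{lem:petri-protocol-decomposition}, this prefix splits into completed protocol intervals followed by at most one unfinished interval. Each completed interval projects to one enabled firing of some \(a\) interleaved with token losses. We then show by induction over intervals that the token projection \(\pi_{\mathrm{PN}}\) at the end of the \(i\)-th completed interval is a marking obtained from \(M_0\) by a lossy firing sequence.

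The unfinished final interval needs a separate argument: inside it, no token with a new place label appears except through \(P_{p,a}\) with \(p\in\mathsf{Produced}(a)\), and these are only deployable after \(E_a\). Once \(E_a\) is present, \(S_a\) was deployed with \(\mathsf{Pre}(a)\) witnessed, so \(a\) was enabled in the token projection at that time. The tokens present during the unfinished interval are therefore included in \((L\setminus\mathsf{Pre}(a))\cup\mathsf{Post}(a)\) together with surviving tokens of \(L\). So \(\pi_{\mathrm{PN}}(\sigma)\) is contained in either a lossy-reachable marking \(L\) or in \(L\) followed by one lossy firing. In both cases \(g\in\pi_{\mathrm{PN}}(\sigma)\), because \(G_g\)'s affinity was satisfied. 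Applying \cref{lem:petri-loss-lifting} gives a reachable marking \(M_f\supseteq\pi_{\mathrm{PN}}(\sigma)\ni g\), so \(g\) is coverable.

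The main obstacle is making this projection invariant precise. Every place label present at any moment must be justified either by the current lossy marking or by the production step of the single in-progress protocol. In particular, the enabledness test performed by \(S_a\) must remain valid for the projection at the moment \(E_a\) appears, even though tokens may be deleted in between. I would handle this by stating the invariant in the form "the current token set is a subset of a lossy-reachable marking, possibly after firing the in-progress \(a\)". Deletions then only shrink the set, and deploying a \(P_{p,a}\) only adds \(p\in\mathsf{Post}(a)\).

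One more point needs checking. Initial types cannot be redeployed because \(\bot\) is never present. Anti-affinity on \(\mathsf{place}=p\) prevents duplicate tokens, so the support abstraction (\cref{thm:logical-pspace-membership}) lets us ignore multiplicities throughout.
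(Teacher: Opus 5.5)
Your proposal is correct and follows essentially the same route as the paper: the forward direction via \cref{cor:petri-completeness}; the backward direction by cutting the run just before the first deployment of \(G_g\), applying \cref{lem:petri-protocol-decomposition}, handling the unfinished final interval through the enabledness of \(a\) certified by \(S_a\) (with \(g\in\mathsf{Post}(a)\) when the goal token is new), and closing with \cref{lem:petri-loss-lifting}. Your invariant (current tokens contained in \(L\), or in \(a(L)\) once \(E_a\) has appeared, the latter relying on \(E_a\)'s anti-affinity to the consumed places) is a uniform packaging of the paper's case split on whether the goal token survives from a completed interval or is produced by some \(P_{g,a}\) in the unfinished one.
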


\begin{proof}
The forward implication is \cref{cor:petri-completeness}. For the reverse
implication, consider the execution prefix ending immediately before the first
deployment of \(G_g\). Its affinity guard shows that this prefix contains a
token for \(g\). By \cref{lem:petri-protocol-decomposition}, every completed
protocol in the prefix projects to a Petri firing, interspersed with possible
token losses. If the goal token is already present after a completed
interval, \cref{lem:petri-loss-lifting} yields a genuine loss-free execution
whose final marking contains \(g\).

Otherwise the goal token is produced in the unique unfinished final interval.
It must be an output \(P_{g,a}\). Before it can appear, \(S_a\) was deployed
while all preset tokens were present, the consumed tokens and ready were
deleted, and \(E_a\) was deployed. Hence \(a\) was enabled in the lossy
marking at the start of the interval. Append the abstract firing of \(a\);
its postset contains \(g\). Applying \cref{lem:petri-loss-lifting} again
yields a genuine Petri execution covering \(g\). Thus no partial or lossy
Kubernetes execution creates a false positive.
\end{proof}

Together with \cref{thm:petri-encoding-poly,thm:logical-pspace-membership},
\cref{thm:petri-encoding-correct} proves \cref{thm:logical-pspace}.

\paragraph*{Lean formalization.}
The encoding and its correctness are mechanized in the Lean~4 library in
\texttt{lean\_code} and available at
\url{https://doi.org/10.5281/zenodo.22025656}. The scheduler model
(configurations, affinity and anti-affinity semantics, deployment and deletion
steps, reachability, and pod-deployability) is defined in
\texttt{Scheduler.lean}. One-safe Petri nets, their firing semantics,
coverability, and the token-loss lifting lemma (\cref{lem:petri-loss-lifting})
are formalized in \texttt{PetriNet.lean} and \texttt{PetriNetEncoding.lean}; the
lossy step relation \texttt{LossyNetStep} combines an ordinary firing with an
arbitrary token loss, and \texttt{loss\_lifting} shows that every lossy abstract
execution is simulated by a loss-free execution whose marking contains the lossy
marking.

The pod roles are an inductive type with constructors for initial tokens,
produced tokens, start, execute, end, ready, and the goal; selectors such as
\texttt{tokenSelector}, \texttt{readySelector}, and \texttt{executeEqSelector}
implement the label predicates, and the functions \texttt{affinity} and
\texttt{antiAffinity} give the concrete guards described above. The initial-only
pods are blocked from redeployment by the affinity to the empty selector
\texttt{botSelector} (\(s[\bot]\)), exactly as in the definition of \(\mathcal
E_{\mathrm{PN}}\).

Completeness (\cref{cor:petri-completeness}) is proved in
\texttt{PetriNetCompleteness.lean}. The proof constructs the scheduler witness
for one simulated firing as an explicit sequence of legal add and remove
operations---deploy start, delete ready and the consumed inputs, deploy execute,
deploy the produced outputs, delete start, deploy end, delete execute, deploy
ready, delete end---and composes these sequences along the Petri firing sequence
by induction, in \texttt{stableConfig\_transition\_to\_stableConfig} and
\texttt{petri\_covers\_implies\_goal\_deployable}.

Soundness (reverse direction of \cref{thm:petri-encoding-correct}) is proved in
\texttt{PetriNetCorrectness.lean}. The proof first establishes that every
reachable configuration satisfies a valid control combination
(\texttt{reachable\_validControlCombination}): each control role is globally
unique, ready cannot coexist with execute, and start cannot coexist with end. It
then splits every scheduler execution at its stable configurations using the
inductively defined \texttt{FirstStableTrace} and \texttt{SinceClosestStable}
decompositions, shows that every completed stable interval corresponds to one
enabled Petri firing or to pure token loss
(\texttt{first\_stable\_reach\_sound}), and that the final interval producing
the goal token requires an enabled transition whose postset contains the goal
place (\texttt{preset\_present\_at\_closest\_stable\_before\_output}). Combining
these with the loss-lifting lemma yields
\texttt{goal\_deployable\_implies\_petri\_covers}. The main equivalence,
\texttt{petri\_covers\_iff\_goal\_deployable} in
\texttt{PetriNetMainTheorem.lean}, combines the two directions and
corresponds exactly to \cref{thm:petri-encoding-correct}.

\section{The Encoding of Bounded Black Pebbling}
\label{app:pebbling}
\label{app:pebble-encoding}

In this appendix, we formalize the polynomial-time encoding function
\(\mathcal E_{\mathrm{Peb}}\) used in \cref{sec:capacity-pspace} and prove
its correctness: the target vertex of a bounded black-pebble instance is
coverable with at most \(k\) pebbles if and only if the target pod is
deployable in the encoded instance.

Let \((G,z,k)\) be a bounded black-pebble coverability instance, where
\(G=(V,E)\) is a DAG, \(z\in V\), and \(k\) is the pebble bound. Write
\(\mathsf{pred}(v)=\{u\mid(u,v)\in E\}\). A pebble position is a set
\(X\subseteq V\) with \(|X|\leq k\). A move either removes any \(v\in X\),
or adds \(v\notin X\) when \(\mathsf{pred}(v)\subseteq X\) and
\(|X|<k\).

\paragraph*{The encoding function.}
Define
\[
 \mathcal E_{\mathrm{Peb}}(G,z,k)
 =
 (\{P_v\mid v\in V\},\{n\},\zero,(P_z,n)).
\]
The unique node has one scalar resource \(r\) with capacity \(C(n)(r)=k\).
Every type \(P_v\) has unit request
\(\texttt{res}(P_v)(r)=1\), carries the unique label
\(\mathsf{id}=v\), and has required affinity terms
\[
 A^+(P_v)=
 \{s[\mathsf{id}=u]\mid u\in\mathsf{pred}(v)\}.
\]
All anti-affinity sets are empty and all static filters are true. Thus a
source type has no affinity terms and can be deployed whenever capacity is
available. The construction uses one type per vertex and one affinity term
per edge.

Define the canonical encoding of a pebble position \(X\) by
\[
 \Phi_{\mathrm{Peb}}(X)(P_v,n)=
 \begin{cases}
 1,&v\in X,\\
 0,&v\notin X,
 \end{cases}
\]
and project an arbitrary Kubernetes configuration to its support:
\[
 \pi_{\mathrm{Peb}}(\sigma)
 =\{v\in V\mid\sigma(P_v,n)>0\}.
\]
Notice that
\(|\pi_{\mathrm{Peb}}(\sigma)|\leq
  \sum_v\sigma(P_v,n)\leq k\).

\begin{theorem}[Polynomiality of the pebbling encoding]
\label{thm:pebble-encoding-poly}
The encoding function \(\mathcal E_{\mathrm{Peb}}\) is computable in time
polynomial in the size of \((G,z,k)\).
\end{theorem}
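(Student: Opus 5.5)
The plan is a direct size count of the output of \(\mathcal E_{\mathrm{Peb}}\), followed by the observation that each component can be emitted by a single pass over the input. Let the input \((G,z,k)\) have size \(N=\Theta(|V|+|E|+\log k)\), assuming \(G\) is given by adjacency lists or an edge list and \(k\) in binary. Note that \(k\le|V|\) may be assumed without loss of generality, since a bound larger than \(|V|\) is never binding. This is not needed for the argument, because the capacity is written in binary anyway.

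I will enumerate the generated objects. There is a single node \(n\) and a single resource dimension \(r\). The capacity is \(C(n)(r)=k\), written in binary with \(O(\log k)\) bits. There is one pod type \(P_v\) per vertex. Each type carries one label \(\mathsf{id}=v\), whose value needs \(O(\log|V|)\) bits, and one unit request \(\texttt{res}(P_v)(r)=1\). For each edge \((u,v)\in E\) there is exactly one affinity term \(s[\mathsf{id}=u]\) attached to \(P_v\), using the hostname topology key. Anti-affinity sets, static filters and spread constraints are all empty or trivially true, so they contribute constant size. The initial configuration is \(\zero\) and has a trivial representation. If an explicit table is required, it has \(|V|\) zero counters. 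The target \((P_z,n)\) has size \(O(\log|V|)\). Summing these contributions gives \(O((|V|+|E|)\log|V|+\log k)\), which is polynomial in \(N\).

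For the running time, I will describe the procedure explicitly. It iterates over \(V\) to emit each type together with its label and request. It then iterates over \(E\) and, for each edge \((u,v)\), appends the selector for \(u\) to \(A^+(P_v)\). Finally it copies \(k\) and \(z\). Each step does work proportional to the size of what it writes, up to logarithmic factors for indexing, so the total time is polynomial.

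There is no real obstacle. The only point needing care is the encoding convention: the bound \(k\) must appear in binary as the capacity, and not as \(k\) separate units. Otherwise the output size could be exponential in \(\log k\). With binary encoding the claimed bound \(O(|V|+|E|+\log k)\) from \cref{sec:capacity-pspace}, up to identifier-length factors, follows immediately.
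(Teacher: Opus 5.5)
Your proposal is correct and follows the same approach as the paper: a direct count of one node, one scalar capacity, one pod type per vertex and one affinity term per edge, giving output size \(O(|V|+|E|+\log k)\), with each component written in a single pass. Your extra bookkeeping is a harmless refinement of the paper's bound: the \(\log|V|\) factor accounts for identifier lengths, and you make explicit that the capacity \(k\) is written in binary.
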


\begin{proof}
The encoding creates one node, one scalar capacity, one pod type per vertex,
and one affinity term per edge. Its size is \(O(|V|+|E|+\log k)\), and it can
be computed in the same time.
\end{proof}

\begin{theorem}[PSPACE membership of the affinity-and-capacity fragment]
\label{thm:capacity-pspace-membership}
Pod-deployability is in \(\PSPACE\) for a fragment with one node, one scalar
resource capacity, unit pod requests, and required inter-pod affinity only.
\end{theorem}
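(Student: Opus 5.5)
The plan is to obtain the statement as a direct instance of \cref{thm:deployability-pspace}. That theorem gives \(\PSPACE\) membership for every capacity-bounded instance, so the only work is to check that the fragment (one node, one scalar resource, unit pod requests, required affinity only) lies inside the capacity-bounded class of \cref{def:capacityBounded}.

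\textbf{Capacity-boundedness.} Fix an instance of the fragment, with node \(n\) and resource \(r\). Every pod type \(t\) has \(\texttt{res}(t)(r)=1>0\), so the condition of \cref{def:capacityBounded} holds for every pair \((t,n)\). The resulting bound is \(B_{t,n}=\lfloor C(n)(r)/1\rfloor=C(n)(r)\). Summing over types, \(\used(\sigma,n)(r)=\sum_t\sigma(t,n)\leq C(n)(r)\) in every reachable configuration. So a configuration is a vector of \(|\PodTypes|\) counters. Each counter is bounded by the binary-encoded capacity and therefore needs only polynomially many bits, matching \cref{lem:poly-configuration}. The number of configurations is then at most
\[
K=\bigl(C(n)(r)+1\bigr)^{|\PodTypes|},
\]
and its binary length is polynomial, as in \cref{lem:configuration-count}.

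\textbf{Guard evaluation.} The procedure of \cref{thm:deployability-pspace-proof} requires that \(\Hard(t,n,\sigma)\) be checkable in polynomial space. Here this amounts to two checks: a single comparison \(\used(\sigma,n)(r)+1\leq C(n)(r)\) of binary numbers, and, for each affinity term of \(t\), a scan of the counters for a matching type with positive count. The self-affinity exception additionally needs a cluster-wide existence test, which is the same scan since there is only one node. Static filters are trivially true, and there are no spread constraints. Membership then follows from \cref{thm:deployability-pspace}, via nondeterministic search of length at most \(K\) and Savitch's theorem.

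There is no real obstacle. The only point needing care is that the statement places no bound on the capacity value \(C(n)(r)\), which may be exponential in the input size. The argument therefore has to work with binary counters rather than unary multiplicities. If one additionally assumes \(C(n)(r)\leq|V|\), as in the pebbling encoding where \(k\) can be taken at most \(|V|\), the state space is even polynomially encodable in unary. This refinement is not needed.
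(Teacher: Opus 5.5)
Your proposal is correct and is essentially the paper's proof: unit requests make every instance of the fragment capacity-bounded in the sense of \cref{def:capacityBounded}, so membership follows directly from \cref{thm:deployability-pspace}, and your remarks on counter sizes and guard evaluation simply unpack that general theorem. (Your bound \(\sum_t\sigma(t,n)\leq C(n)(r)\) on reachable configurations tacitly assumes that \(\sigma_0\) respects capacity, as the paper also does; if it does not, the counters are still bounded by the binary-encoded input values of \(\sigma_0\), so the conclusion is unaffected.)
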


\begin{proof}
A fragment where each pod type requires one unit of a unique
available resource contains problem instances that are capacity-bounded 
(see \cref{def:capacityBounded}). Pod-deployability is then
in \(\PSPACE\) as a consequence of \cref{thm:deployability-pspace-proof}.
\end{proof}

\begin{lemma}[Completeness of \(\mathcal E_{\mathrm{Peb}}\)]
\label{lem:pebble-completeness}
If the bounded pebble game moves from \(X\) to \(Y\), then
\(\Phi_{\mathrm{Peb}}(X)\longrightarrow
  \Phi_{\mathrm{Peb}}(Y)\). Consequently, every bounded pebbling execution
from \(\varnothing\) has a corresponding Kubernetes execution from \(\zero\).
\end{lemma}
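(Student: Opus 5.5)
The plan is to prove the single-move claim by a case split on the two kinds of pebble moves. Then we obtain the execution claim by induction on the length of the pebbling sequence, since \(\Phi_{\mathrm{Peb}}(\varnothing)=\zero\). In both cases the target configuration differs from \(\Phi_{\mathrm{Peb}}(X)\) in exactly one coordinate \((P_v,n)\), changing it by one. So it suffices to exhibit one \textsc{Delete} or one \textsc{Deploy} step.

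\emph{Removal.} Suppose the move removes \(v\in X\), so \(Y=X\setminus\{v\}\). Then \(\Phi_{\mathrm{Peb}}(X)(P_v,n)=1>0\), and the \textsc{Delete} rule yields \(\Phi_{\mathrm{Peb}}(X)-(P_v,n)=\Phi_{\mathrm{Peb}}(Y)\) directly. No guard needs to be checked.

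\emph{Placement.} Suppose the move adds \(v\notin X\) with \(\mathsf{pred}(v)\subseteq X\) and \(|X|<k\), so \(Y=X\cup\{v\}\). Then \(\Phi_{\mathrm{Peb}}(X)+(P_v,n)=\Phi_{\mathrm{Peb}}(Y)\), because the coordinate was \(0\). We must verify \(\Hard(P_v,n,\Phi_{\mathrm{Peb}}(X))\) componentwise against \cref{def:hard}.
\begin{itemize}
\item \emph{Capacity.} We have \(\used(\Phi_{\mathrm{Peb}}(X),n)=|X|\), since each present type contributes exactly one instance with unit request. Hence \(|X|+1\leq k=C(n)(r)\).
\item \emph{Static constraints and spread.} \(\StaticOK\) holds trivially, and \(\mathsf{Spread}(P_v)=\varnothing\).
\item \emph{Anti-affinity.} \(A^-(P_v)=\varnothing\).
\item \emph{Affinity.} For each term \(s[\mathsf{id}=u]\) with \(u\in\mathsf{pred}(v)\), we have \(u\in X\). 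Thus \(\Phi_{\mathrm{Peb}}(X)(P_u,n)=1\), and on the unique node \(D_\tau(n)=\{n\}\), so \(\mathsf{present}\) holds. The self-affinity exception is irrelevant. Labels \(\mathsf{id}\) are unique, so \(\sem{s[\mathsf{id}=u]}=\{P_u\}\). Moreover, \(P_v\in\sem{s}\) would force \(u=v\), which the DAG's acyclicity excludes, so every affinity term is imposed and satisfied.
\end{itemize}

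The only delicate point, and the step I would check most carefully, is the affinity component. One must argue that selectors match exactly one type and that the bootstrap clause of \(\mathsf{AffinityOK}\) never weakens or interferes. Acyclicity, which rules out self-loops, settles this. The remaining steps are routine.
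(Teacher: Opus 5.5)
Your proposal is correct and follows essentially the same route as the paper: a case split into placement (affinity witnessed because \(\mathsf{pred}(v)\subseteq X\), capacity because \(|X|+1\leq k\)) and removal (delete the unique instance of \(P_v\)), then induction on the move sequence starting from \(\Phi_{\mathrm{Peb}}(\varnothing)=\zero\). Your extra check of the self-affinity bootstrap clause is sound but not needed, since that exception only drops affinity terms and so cannot invalidate a placement whose terms are all witnessed.
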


\begin{proof}
Suppose first that \(Y=X\cup\{v\}\). Legality of the pebble placement gives
\(\mathsf{pred}(v)\subseteq X\) and \(|X|<k\). For every predecessor
\(u\), configuration \(\Phi_{\mathrm{Peb}}(X)\) contains \(P_u\), so every
affinity term of \(P_v\) is satisfied. Its resource usage is
\(|X|+1\leq k\), so capacity also holds. Deploying \(P_v\) yields exactly
\(\Phi_{\mathrm{Peb}}(Y)\), as illustrated in
\cref{fig:pebble-encoding}. If \(Y=X\setminus\{v\}\), deleting the unique
instance of \(P_v\) yields \(\Phi_{\mathrm{Peb}}(Y)\). Induction on the move
sequence proves the second claim.
\end{proof}

\begin{figure}[t]
\centering
\begin{tikzpicture}[
  >=Latex,
  vertex/.style={circle,draw,minimum size=7mm,inner sep=0pt},
  pebbled/.style={vertex,fill=black,text=white},
  pod/.style={rounded corners,draw,fill=blue!8,minimum width=10mm,
              minimum height=6mm,font=\small},
  lab/.style={font=\small,align=center}
]
\node[pebbled] (u) at (0,1.2) {$u$};
\node[pebbled] (w) at (0,0) {$w$};
\node[vertex] (v) at (1.7,0.6) {$v$};
\draw[->] (u) -- (v);
\draw[->] (w) -- (v);
\node[lab] at (0.8,1.85) {position $X=\{u,w\}$};

\node[pebbled] (u1) at (0,-2.2) {$u$};
\node[pebbled] (w1) at (0,-1) {$w$};
\node[pebbled] (v1) at (1.7,-1.6) {$v$};
\draw[->] (u1) -- (v1);
\draw[->] (w1) -- (v1);
\node[lab] at (0.8,-2.85) {position $X \cup \{v\}=\{u,w,v\}$};

\draw[->] (1,-.1) -- (1,-.7);

\draw[->,very thick] (2.35,0.6) -- node[above,lab]
  {$\Phi_{\mathrm{Peb}}$} (3.45,0.6);
\node[draw,rounded corners,minimum width=34mm,minimum height=18mm]
  (before) at (5.25,0.6) {};
\node[lab] at (5.25,1.28) {node $n$, capacity $k=3$};
\node[pod] at (4.65,0.48) {$P_u$};
\node[pod] at (5.85,0.48) {$P_w$};
\node[lab] at (5.25,-0.05) {position $X$};

\node[draw,rounded corners,minimum width=34mm,minimum height=16mm]
  (after) at (5.25,-1.45) {};
\draw[->,very thick] (2.35,-1.45) -- node[above,lab]
  {$\Phi_{\mathrm{Peb}}$} (3.35,-1.45);
\node[pod] at (4.15,-1.45) {$P_u$};
\node[pod] at (5.25,-1.45) {$P_w$};
\node[pod,fill=green!12] at (6.35,-1.45) {$P_v$};
\draw[->] (before) -- node[right,lab]
  {deploy $P_v$; affinity to $P_u,P_w$} (after);
\node[lab] at (5.25,-2.0) {position $X\cup\{v\}$;};
\end{tikzpicture}
\caption{A pebble position and its canonical pod configuration. Since both
predecessor pods are present and one capacity unit is free, deploying \(P_v\)
implements the legal move \(X\mapsto X\cup\{v\}\).}
\label{fig:pebble-encoding}
\end{figure}

Because the affinity-only language cannot forbid duplicate instances, the
reverse direction must apply to noncanonical configurations as well.

\begin{lemma}[Support simulation]
\label{lem:pebble-support-simulation}
For every Kubernetes transition \(\sigma\longrightarrow\sigma'\), exactly one
of the following holds:
\begin{enumerate}
\item \(\pi_{\mathrm{Peb}}(\sigma')=
      \pi_{\mathrm{Peb}}(\sigma)\); or
\item \(\pi_{\mathrm{Peb}}(\sigma)\longrightarrow
      \pi_{\mathrm{Peb}}(\sigma')\) is a legal bounded pebble move.
\end{enumerate}
\end{lemma}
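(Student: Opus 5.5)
We argue by case analysis on the rule generating the transition \(\sigma\longrightarrow\sigma'\), using only the definition of \(\pi_{\mathrm{Peb}}\) as the support of \(\sigma\) restricted to \(n\), the structure of \(\mathcal E_{\mathrm{Peb}}\), and the remark that \(|\pi_{\mathrm{Peb}}(\sigma)|\le\sum_v\sigma(P_v,n)\le k\) for every configuration. Exclusivity of the two alternatives is immediate: every legal pebble move changes the position by exactly one vertex, so it never relates a position to itself. Hence the two alternatives cannot hold simultaneously, and it suffices to show that at least one of them holds.

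\textbf{Deletion.} Suppose \(\sigma'=\sigma-(P_v,n)\) with \(\sigma(P_v,n)>0\). If \(\sigma(P_v,n)\ge2\), then \(v\) is still in the support, and \(\pi_{\mathrm{Peb}}(\sigma')=\pi_{\mathrm{Peb}}(\sigma)\); this is case~1. If \(\sigma(P_v,n)=1\), then \(\pi_{\mathrm{Peb}}(\sigma')=\pi_{\mathrm{Peb}}(\sigma)\setminus\{v\}\) with \(v\in\pi_{\mathrm{Peb}}(\sigma)\). This is a legal removal, and the bound is preserved because the position only shrinks. This is case~2.

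\textbf{Deployment.} Suppose \(\sigma'=\sigma+(P_v,n)\) with \(\Hard(P_v,n,\sigma)\). If \(\sigma(P_v,n)>0\), the support is unchanged, giving case~1. Otherwise \(\pi_{\mathrm{Peb}}(\sigma')=X\cup\{v\}\) with \(v\notin X:=\pi_{\mathrm{Peb}}(\sigma)\), and we must verify the two placement conditions.

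\emph{Predecessor condition.} The self-affinity exception never applies, because the selector \(s[\mathsf{id}=u]\) matches only \(P_u\) and \(u\neq v\) (the graph is a DAG, so it has no self-loops). So \(\mathsf{AffinityOK}\) requires \(\mathsf{present}_\sigma(s[\mathsf{id}=u],\tau,n)\) for each \(u\in\mathsf{pred}(v)\). On a single node this means \(\sigma(P_u,n)>0\), that is, \(u\in X\). Hence \(\mathsf{pred}(v)\subseteq X\).

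\emph{Bound condition.} The capacity check \(\used(\sigma,n)+1\le k\) gives \(\sum_w\sigma(P_w,n)<k\), so \(|X|<k\) and \(|X\cup\{v\}|\le k\). Therefore the move \(X\to X\cup\{v\}\) is a legal bounded placement, which is case~2.

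The main point of care is the deployment step. First, the self-affinity bootstrap clause must be ruled out explicitly, and acyclicity is what does this. Second, the bound must be derived from the total multiplicity rather than from the support, since duplicate instances inflate resource usage but not the position size. In that direction the inequality \(|X|\le\sum_w\sigma(P_w,n)\) is exactly what is needed.
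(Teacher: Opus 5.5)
Your proof is correct and follows essentially the same route as the paper. You use the same case split on deployment versus deletion, and on whether another instance of \(P_v\) is present, with \(\mathsf{pred}(v)\subseteq X\) coming from the affinity guard and \(|X|<k\) from the capacity check via \(|X|\le\sum_w\sigma(P_w,n)\); you also make explicit two points the paper leaves implicit, namely why the two alternatives are mutually exclusive and why the self-affinity bootstrap exception never fires (a DAG has no self-loops, so \(P_v\) never matches its own affinity selectors).
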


\begin{proof}
Consider a deployment of \(P_v\). If \(P_v\) was already present, support
does not change. Otherwise the support gains \(v\). Every affinity term was
checked before deployment, hence
\(\mathsf{pred}(v)\subseteq\pi_{\mathrm{Peb}}(\sigma)\). Moreover, capacity
requires
\[
  \sum_{u\in V}\sigma(P_u,n)<k.
\]
Since support counts at most one vertex per
running instance,
\(|\pi_{\mathrm{Peb}}(\sigma)|<k\). Adding \(v\) is therefore a legal
bounded pebble move.

Now consider deletion of one \(P_v\). If another instance remains, support
is unchanged. If the deleted pod was the last instance, support loses \(v\),
which is always a legal pebble-removal move. These cases exhaust the
transition relation.
\end{proof}

\begin{corollary}[Soundness of \(\mathcal E_{\mathrm{Peb}}\)]
\label{cor:pebble-soundness}
Every Kubernetes execution from \(\zero\) projects, after deleting
support-preserving stuttering steps, to a legal bounded pebbling execution
from \(\varnothing\).
\end{corollary}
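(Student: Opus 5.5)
The plan is to prove the corollary by induction on the length of the Kubernetes execution \(\zero=\sigma_0\longrightarrow\sigma_1\longrightarrow\cdots\longrightarrow\sigma_m\), using \cref{lem:pebble-support-simulation} at every step. The claim to carry through the induction is the following: the projected sequence \(\pi_{\mathrm{Peb}}(\sigma_0),\dots,\pi_{\mathrm{Peb}}(\sigma_i)\), with consecutive duplicates removed, is a legal bounded pebbling execution that starts at \(\varnothing\) and ends at \(\pi_{\mathrm{Peb}}(\sigma_i)\).

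The base case is \(\pi_{\mathrm{Peb}}(\zero)=\varnothing\), since no pod runs in \(\zero\). This is the empty pebble position, which is the starting position of the black-pebble game.

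For the inductive step, take the transition \(\sigma_i\longrightarrow\sigma_{i+1}\). \Cref{lem:pebble-support-simulation} gives two cases.
\begin{enumerate}
\item The support is unchanged. The step is a support-preserving stuttering step, so we delete it, and the projected execution is unchanged and still ends at \(\pi_{\mathrm{Peb}}(\sigma_{i+1})=\pi_{\mathrm{Peb}}(\sigma_i)\).
\item The step \(\pi_{\mathrm{Peb}}(\sigma_i)\longrightarrow\pi_{\mathrm{Peb}}(\sigma_{i+1})\) is a legal bounded pebble move. We append it to the projected execution, which gives a legal pebbling execution ending at \(\pi_{\mathrm{Peb}}(\sigma_{i+1})\).
\end{enumerate}

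Legality of a bounded execution also requires every intermediate position to have at most \(k\) pebbles. This follows from the observation \(|\pi_{\mathrm{Peb}}(\sigma)|\le\sum_v\sigma(P_v,n)\le k\) stated right after the definition of \(\pi_{\mathrm{Peb}}\). That observation holds for every reachable \(\sigma\), because each \(P_v\) requests one unit of \(r\) and capacity is checked on every deployment.

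There is no real obstacle here, since the substantive work, including the capacity argument for the \(|X|<k\) side condition, is already done in \cref{lem:pebble-support-simulation}. The one point to state explicitly is that the two cases of that lemma exhaust all transitions, so that the deleted stuttering steps are exactly the case-1 steps. As a consequence, if \((P_z,n)\) is covered at the end of the execution, then \(z\) lies in the final projected position. Combined with \cref{lem:pebble-completeness}, this gives the equivalence needed for \cref{thm:capacity-pspace}.
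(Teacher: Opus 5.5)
Your proposal is correct and follows the paper's argument. The paper also applies \cref{lem:pebble-support-simulation} to each transition and notes three things: the initial support is empty, every non-stuttering projected step is a legal pebble move, and every projected support has at most $k$ pebbles. Your induction on the length of the execution is the same argument, with the per-step application of the lemma written out explicitly.
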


\begin{proof}
Apply \cref{lem:pebble-support-simulation} to each transition. The projected
initial support is empty, every non-stuttering projected step is a legal
pebble move, and every projected support has size at most \(k\).
\end{proof}

\begin{theorem}[Correctness of the pebbling encoding]
\label{thm:pebble-encoding-correct}
The target vertex \(z\) is coverable with at most \(k\) black pebbles if and
only if \((P_z,n)\) is pod-deployable in
\(\mathcal E_{\mathrm{Peb}}(G,z,k)\).
\end{theorem}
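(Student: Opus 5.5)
We prove the two implications separately, each as a direct consequence of the preceding lemma in its direction.

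\emph{Forward direction.} Assume $z$ is coverable with at most $k$ black pebbles. Then there is a legal bounded pebbling execution $\varnothing=X_0\to X_1\to\dots\to X_m$ with $z\in X_m$. By \cref{lem:pebble-completeness}, each move $X_i\to X_{i+1}$ is mirrored by a Kubernetes transition $\Phi_{\mathrm{Peb}}(X_i)\longrightarrow\Phi_{\mathrm{Peb}}(X_{i+1})$. Since $\Phi_{\mathrm{Peb}}(\varnothing)=\zero$ is the initial configuration of $\mathcal E_{\mathrm{Peb}}(G,z,k)$, composing these transitions gives $\zero\longrightarrow^*\Phi_{\mathrm{Peb}}(X_m)$. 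Because $z\in X_m$, we have $\Phi_{\mathrm{Peb}}(X_m)(P_z,n)=1\geq 1$, so $(P_z,n)$ is deployable.

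\emph{Reverse direction.} Assume $\zero\longrightarrow^*\sigma$ with $\sigma(P_z,n)\geq1$. By \cref{cor:pebble-soundness}, the projected sequence $\pi_{\mathrm{Peb}}(\zero)=\varnothing,\dots,\pi_{\mathrm{Peb}}(\sigma)$ becomes a legal bounded pebbling execution once we drop its stuttering steps. Two facts complete the argument:
\begin{itemize}
\item every projected position has size at most $k$, by the remark $|\pi_{\mathrm{Peb}}(\sigma)|\leq\sum_v\sigma(P_v,n)\leq k$ and by the capacity check made inside \cref{lem:pebble-support-simulation};
\item $z\in\pi_{\mathrm{Peb}}(\sigma)$, since $\sigma(P_z,n)>0$.
\end{itemize}
Hence $z$ is covered within the pebble bound.

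\emph{Main subtlety.} The reverse direction must handle duplicate instances of the same $P_v$, which the affinity-only language cannot exclude. This is already resolved by projecting configurations to their supports: duplicates only generate stuttering steps, and the capacity bound on the instance count dominates the support size. It also matters that \cref{lem:pebble-support-simulation} checks affinity against the support *before* the deployment. This is exactly the black-pebble placement rule $\mathsf{pred}(v)\subseteq X$, including the case where a predecessor pebble is removed afterwards. No further argument is required.
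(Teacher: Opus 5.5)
Your proposal is correct and matches the paper's proof. The forward direction composes \cref{lem:pebble-completeness} along the pebbling run starting from \(\Phi_{\mathrm{Peb}}(\varnothing)=\zero\), and the reverse direction applies \cref{cor:pebble-soundness} to the support projection, with \(z\in\pi_{\mathrm{Peb}}(\sigma)\) because \(\sigma(P_z,n)>0\); duplicate instances and the size bound are handled in \cref{lem:pebble-support-simulation}, as you note.
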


\begin{proof}
If a bounded pebbling execution reaches \(X\ni z\),
\cref{lem:pebble-completeness} produces a Kubernetes execution reaching
\(\Phi_{\mathrm{Peb}}(X)\), which contains \(P_z\). Conversely, suppose a
Kubernetes execution reaches \(\sigma\) with \(\sigma(P_z,n)>0\). Then
\(z\in\pi_{\mathrm{Peb}}(\sigma)\). By
\cref{cor:pebble-soundness}, this support is reachable by a bounded pebbling
execution, after omitting stuttering steps. Hence that pebbling execution
covers \(z\).
\end{proof}

Together with \cref{thm:pebble-encoding-poly,thm:capacity-pspace-membership},
\cref{thm:pebble-encoding-correct} proves \cref{thm:capacity-pspace}.

\paragraph*{Lean formalization.}
The pebbling encoding and its correctness are mechanized in the Lean~4 library
in \texttt{lean\_code}, availabile at
\url{https://doi.org/10.5281/zenodo.22025656}. The bounded black-pebble game is
defined in \texttt{Pebbling.lean}: a \texttt{PebbleGame} records the predecessor
function, the target, and the bound; \texttt{Step} is the inductive legal-move
relation (placement requires the predecessors to be present and spare capacity,
removal is always legal), and \texttt{Covers} asks for a reachable position
containing the target.

The encoding is formalized in \texttt{PebbleEncoding.lean}. The scheduler
specification \texttt{spec} gives each vertex pod a unit request, the node a
capacity equal to the bound, and one affinity term \texttt{predecessorTerm}
per edge; the initial configuration is empty. The canonical configuration of
a position is \texttt{canonical}, and the support projection is
\texttt{support}. Completeness (\cref{lem:pebble-completeness}) is proved by
\texttt{pebbleStep\_scheduler} and \texttt{pebbleRun\_scheduler}, which show
that every legal pebble move is simulated by a scheduler transition between
canonical configurations. Because affinity alone cannot forbid duplicate
instances, soundness is proved for arbitrary configurations:
\texttt{schedulerStep\_projects} shows that every scheduler step either
preserves the support (a stuttering step) or projects to a legal pebble move,
and \texttt{schedulerRun\_projects} lifts this to whole executions, also
showing that the resource usage never exceeds the bound. The main
equivalence, \texttt{encoding\_correct\_and\_complete}, combines the two
directions and corresponds to \cref{thm:pebble-encoding-correct}.

\end{document}